\newtheorem{theorem}{Theorem}
\newtheorem{lemma}{Lemma}
\theoremstyle{remark}
\newtheorem{remark}{Remark}
\begin{document}

\title{Fully Bayesian Penalized Regression with a Generalized Bridge Prior} 

\author{Ding Xiang \\ School of Statistics\\ University of
  Minnesota\\ \texttt{xiang045@umn.edu}  \and Galin L. Jones \\ School of
  Statistics\\ University of Minnesota\\ \texttt{galin@umn.edu} }

\date{\today}

\maketitle
                               
\begin{abstract} 
We consider penalized regression models under a unified framework where the
particular method is determined by the form of the penalty term. We propose
a fully Bayesian approach that incorporates both sparse and dense settings and
show how to use a type of model averaging approach to eliminate the nuisance
penalty parameters and perform inference through the marginal posterior
distribution of the regression coefficients.  We establish tail robustness of
the resulting estimator as well as conditional and marginal posterior
consistency. We develop an efficient component-wise Markov chain Monte Carlo
algorithm for sampling. Numerical results show that the method tends to select
the optimal penalty and performs well in both variable selection and
prediction and is comparable to, and often better than alternative methods.
Both simulated and real data examples are provided.  
\end{abstract}

\newpage

\section{Introduction}
Penalized regression methods such as the lasso \citep{tibs:1996}, ridge
regression \citep{hoer:kenn:1970}, and bridge regression
\citep{fran:frie:1993, fu:1998} have become popular alternatives to ordinary
least squares (OLS).  All of these methods can be viewed in a common
framework. If $Y$ is a centered $n$-vector of responses, $X$ is a standardized
$n\times p$ matrix, and $\beta$ is a $p$-vector, then estimates are obtained
by solving
\[
\arg\min_{\beta}\left\{(Y-X \beta)^T(Y-X\beta)+\lambda||\beta||_\alpha \right\}\,,
\label{eq:penalty1}
\]
where $||\beta||_\alpha = \sum_{i=1}^p|\beta_i|^\alpha$, $\lambda\geq 0$ and
$\alpha \ge 0$. When $\lambda=0$ the OLS estimator is recovered, while if
$\lambda>0$, then $\alpha=1$ corresponds to the lasso, $1< \alpha <2$
corresponds to bridge regression, and $\alpha=2$ corresponds to ridge
regression. Now $\alpha < 1$ is useful in sparse settings
\citep{zheng:etal:2015} but non-convexity has limited its application.   

While it has become routine to choose $\lambda$  using cross validation on a
grid of possible values, the choice of $\alpha$ is complicated by the fact
that each method performs best in different regimes defined by the nature of
the unknown true parameter $\beta^0$ and whether the goal is variable
selection, estimation, or prediction \citep{fu:1998, hastie:etal:ESL:2009,
hastie:etal:SLS:2015, tibs:1996, wang:etal:2019, zou:hast:2005}. Moreover, the
dominant view of estimating $\alpha$ is apparently that ``...it is not worth
the effort...'' \citep[][p. 72]{hastie:etal:ESL:2009}.  Thus the default
approach in applications has been to preselect  $\alpha=1$ or $\alpha=2$ or
perhaps choose between them using cross validation. 

Bayesian approaches to penalized regression methods also have received much
recent attention.  \cite{tibs:1996} characterized the lasso estimates as a
posterior quantity, however, the first explicit Bayesian approach to lasso
regression is introduced by \cite{park:case:2008} followed by \cite{hans:2009}
and \cite{kyun:etal:2010}.  \cite{fu:1998} and \cite{pols:etal:2014} studied
Bayesian bridge regression while \cite{case:1980}, \cite{fran:frie:1993}, and
\cite{grif:brow:2013} considered Bayesian ridge regression. Of course, 
Bayesian approaches also require a choice of $\lambda$ and $\alpha$. Some
Bayesian approaches have incorporated a prior for $\lambda$, some have used
empirical Bayes approaches to estimate it, and some have conditioned on it
\citep{case:1980, hans:2009, khar:hobe:2013, kyun:etal:2010, park:case:2008,
roy:chak:2017}. On the other hand, there has been little investigation of how
to deal with $\alpha$. \citet{pols:etal:2014} considered priors for $\alpha
\in (0,1)$, but other Bayesian methods condition on the choice of $\alpha$
through preselection.

There have been a number of other Bayesian approaches to linear regression for
sparse signal detection. These have typically centered around spike-and-slab
priors \citep{geor:mccu:1993, mitc:beau:1989, nari:he:2014, rock:george:2016}
and continuous shrinkage priors \citep{carv:etal:2010, grif:brow:2017,
pols:etal:2010, fabr:etal:2010, sala:etal:2012, grif:hoff:2017}.  

We propose a fully Bayesian approach to penalized regression that incorporates
both sparse and dense settings and show how to use a type of model averaging
approach to eliminate the nuisance penalty parameters $(\lambda, \alpha)$ and
perform inference through the marginal posterior distribution of the
regression coefficients.  Although we use a version of spike-and-slab priors
we will see that our approach has more in common with local-global priors.  In
particular, we show that our prior has a local-global interpretation and leads
to the same sort of tail-robustness properties enjoyed by the horseshoe prior
\citep{carv:etal:2010}.  We also consider the setting where dimension grows
with sample size and establish both conditional and marginal strong posterior
consistency.

We explore the properties of the proposed model via simulation and compare it
to a number of alternatives such as Bayesian and frequentist versions of lasso
and ridge regression as well as the horseshoe estimator \citep{carv:etal:2010}
and spike-and-slab lasso regression \citep{rock:george:2016}.  We will
demonstrate that our approach results in estimation and prediction that is
comparable to, and often better than, existing methods.  Moreover, while our
approach performs well in sparse settings, our simulation results also show
that it performs well in dense settings.  

Our starting point is the standard Bayesian formulation of penalized
regression models which assumes
\begin{equation*}
Y|X,\beta,\gamma \sim \text{N}(X\beta, \gamma^{-1}I_n),
\end{equation*}
with $I_n$ an $n \times n$ identity matrix, along with priors $\nu(\gamma)\propto \gamma^{-1} $ and
\begin{align}
\nu(\beta \vert \gamma, \lambda,\alpha) = \left(
  \dfrac{\alpha(\gamma\lambda)^{1/\alpha}}{2^{1/\alpha+1}\Gamma(1/\alpha)}
\right)^p \exp\left\{-\frac{\gamma\lambda}{2}\|\beta\|_{\alpha}\right\}\, .
\label{eq:beta_intro}
\end{align}
Notice that if $Y=y$ is observed and $(\lambda, \alpha)$ is fixed, this yields a marginal posterior density 
\begin{equation}
\label{eq:cond post}
q(\beta|y) \propto \left[(y-X\beta)^T(y-X\beta)+\lambda \|\beta\|_\alpha
\right]^{-[n/2 + p/\alpha]}  
\end{equation}
from which one can easily observe that the estimator obtained in
\eqref{eq:penalty1} amounts to the posterior mode and is thus suboptimal under
squared error loss for which the Bayes (optimal) estimator is the posterior
mean; see \cite{hans:2009} for a clear discussion on this point in the context
of the Bayesian lasso and \cite{berg:1985} for more general settings.

We propose a fully Bayesian hierarchical model using a more general version of
the prior in \eqref{eq:beta_intro} and incorporating a prior for $(\lambda,
\alpha) \in [0,\infty)^p \times [k_1,k_2]$, where $0<k_1\leq 1,\ 2 \leq k_2 $,
which yields a posterior density $q(\beta, \gamma, \lambda, \alpha |y)$.
Allowing $k_1$ to be less than 1 will encourage sparsity when appropriate,
while allowing $k_2$ to be larger than 2 will yield improved performance in
dense settings.  This fully Bayesian approach encourages inference to proceed
naturally using a type of model-averaging.  If estimation of the true value of
$\beta$ is of interest, then the marginal density $q(\beta |y)$ can be used to
produce an estimate along with posterior credible intervals. If prediction of
a future value $\tilde{Y}$ is desired we calculate the posterior mean of the
posterior predictive density while prediction intervals based on the posterior
predictive density are conceptually straightforward.  

We can also use the hierarchical model to perform inference about $(\lambda,
\alpha)$ based on the appropriate marginal density.  Consider estimation of
$\alpha$. In Section~\ref{sec:examples} we conduct a simulation study where
four scenarios are identified such that in scenario I  and IV the lasso should
be preferred, while in scenario II ridge and lasso should be comparable, and
in scenario III ridge should be preferred.  The estimated marginal posterior
density for $q(\alpha|y)$ for a single simulated data set from each scenario
is displayed in Figure~\ref{fig:alpha_posterior}.  We see that the posterior
density tends to have most of its mass near the values of $\alpha$
corresponding to the optimal penalization method.   These results were typical
in our simulations.

\begin{figure}[ht!]
 \centering
  \includegraphics[width=0.7\textwidth]{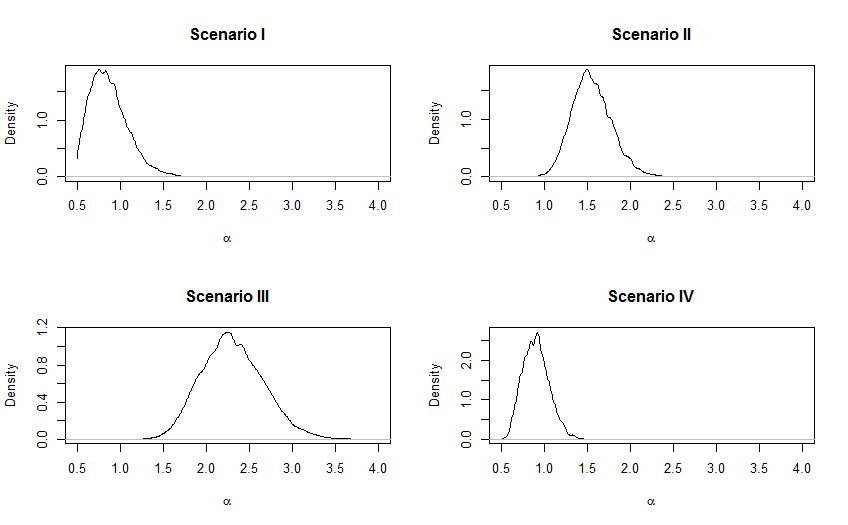}
\caption{Estimated marginal posterior density of $\alpha$ in four scenarios with a uniform prior on $[0.5, 4]$.}
 \label{fig:alpha_posterior}
\end{figure}

The posterior for the proposed hierarchical model is analytically intractable in the sense that it is difficult to calculate the required posterior quantities.  Thus we develop an efficient component-wise Markov chain Monte Carlo (MCMC) algorithm \citep{john:jone:neat:2013} to sample from the posterior.  We also consider Monte Carlo approaches to estimating posterior credible intervals and interval estimates based on the posterior predictive distribution. 

The rest of the paper is organized as follows. In Section~\ref{sec:hm} we introduce the hierarchical model. Then we turn our attention to some theoretical properties of the model by establishing certain tail robustness properties in Section~\ref{sec:shrinkage} and then studying strong posterior consistency in Section~\ref{sec:consistency}.  Section~\ref{sec:est and pred} addresses estimation and prediction with a Markov chain Monte Carlo algorithm. Simulation experiments and a data example are presented in Sections~\ref{sec:examples} and~\ref{sec:diabetes}, respectively. Some final remarks are given in Section~\ref{sec:remarks}. All proofs are deferred to the appendix.

\section{Hierarchical Model}
\label{sec:hm}

We continue to assume the response $Y$ follows a normal distribution 
\begin{equation}
Y|X,\beta,\gamma \sim \text{N}(X\beta, \gamma^{-1}I_n)\, .
\label{eq:likelihood}
\end{equation}
We also assume a proper conjugate prior $\gamma \sim \text{Gamma}(e_3 , \, f_3)$. Next, we assume
\begin{equation}
\nu(\beta \vert \gamma, \lambda,\alpha) = \left(\dfrac{\alpha(\gamma)^{1/\alpha}}{2^{1/\alpha+1}\Gamma(1/\alpha)}\right)^{p}
\left(\prod_{i=1}^p\lambda_i\right)^{1/\alpha}
\exp\left\{-\frac{\gamma}{2}\sum_{i=1}^p\lambda_i|\beta_i|^\alpha\right\}.  \label{eq:beta prior} 
\end{equation}
The only difference from \eqref{eq:beta_intro} is that for each $\beta_i$ we assign a parameter $\lambda_i \ge 0$, which allows for differing shrinkage in estimating each component.  Figure~\ref{fig:beta_prior} displays the density for some settings of $\alpha$ and $\lambda$. 
\begin{figure}[ht!]
 \centering\includegraphics[scale=.6]{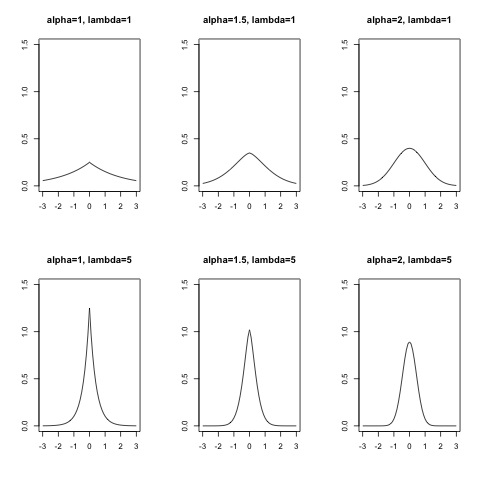}
 \caption{Prior density of $\beta_i$ for different values of $\alpha$ and $\lambda_i$ when $\gamma = 1$.}
 \label{fig:beta_prior}
\end{figure}
 Routine calculation shows that $E(\beta_i |\gamma, \lambda_i, \alpha)=0$ and 
\begin{align*} 
\text{Var}(\beta_i \vert \gamma, \lambda_i,\alpha)= \frac{\Gamma(3/\alpha)}{\Gamma(1/\alpha)} (\gamma \lambda_i)^{-2/\alpha}4^{1/\alpha} \, .   
\end{align*} 
Hence the variance is a decreasing function of $\lambda_i$. If $\lambda_i$ is small, larger values of $\beta_i$ are likely but if $\lambda_i$ is large, smaller values of $\beta_i$ are likely.  This suggests a way to incorporate a spike-and-slab prior through the prior for $\lambda_i$.  Specifically, we assume
\begin{align}
 \label{eq:prior_lambda}
  \nu(\lambda_i|\kappa_i, e_1, f_1, e_2, f_2) & = (1-\kappa_i) \text{Gamma}(\lambda_i; e_1, f_1) + \kappa_i \text{Gamma}(\lambda_i; e_2, f_2)
\end{align}
and $\nu(\kappa_i) \sim \text{Bern}(1/2)$.  The hyperparameters are chosen so that one component of the mixture has a small mean and variance while the other can have a relatively large mean and variance.

Finally, we need to specify a prior for $\alpha$.  Notice that, unlike $\lambda_i$ which controls shrinkage for an individual $\beta_i$, the parameter  $\alpha$ is common to all of the $\beta_i$. If one wants to stay with the analogy with the frequentist methods in \eqref{eq:penalty1}, then it is natural to assume 
\begin{align}
\nu(\alpha|c_1,c_2, c_3)=& c_1\text{Beta}(\alpha-1, a_1, b_1) + c_2\text{Beta}(\alpha-1;a_2, b_2) + c_3\text{Beta}(\alpha-1;a_3,b_3),
\label{alpr_org}
\end{align}
where $\text{Beta}(\alpha-1, a, b)$ is a $\text{Beta}( a, b)$ shifted to have support on $[1,2]$ and each $c_j\in [0, 1]$ such that $\sum_{j=1}^3 c_j=1$.  The idea here is that each component represents the analyst's assessment of the relative importance of lasso, bridge, and ridge, but our empirical work indicated that different choices yield similar estimation and prediction. This motivated us to consider a uniform distribution  for $\alpha$ which we have found to work well,  especially since extending the range of $\alpha$ appears to be impactful. Therefore we assume
\begin{align}
\alpha \sim &\text{Unif}(k_1, k_2) ~~~~k_1\leq 1,~~ 2 \leq k_2 .
\label{alpr}
\end{align}
One expects that $\alpha < 1$ will encourage even more sparse results and recover best subset selection for small $\alpha$.  In our experience estimation and prediction performance are similar among different choices of $k_1$. However, allowing $k_2 > 2$ is especially helpful in dense settings with small effects.  Consider Figure~\ref{fig:beta_alpha_margin_1} and \ref{fig:beta_alpha_margin_3} which are contour plots of the joint posterior density for $(\beta, \alpha)$ when $\beta$ is a scalar. We simulated data $y$ under the assumption of the true $\beta^0 =0.5$. In Figure~\ref{fig:beta_alpha_margin_1}, we have the ordinary range of $\alpha \in [0.1, 2]$. The posterior distribution clearly concentrates near $(\alpha=0.5, \beta=0)$, which would lead us to estimate $\beta$ with 0. In Figure~\ref{fig:beta_alpha_margin_3}, we expand the range to have $\alpha \in [0.1, 8]$ while keeping $e$ and $f$ unchanged.  In this case we see that the posterior does not concentrate near $\beta=0$ and hence will allow us to more reasonably estimate small nonzero effects.

\begin{figure}[ht!]
 \centering\includegraphics[scale=.5]{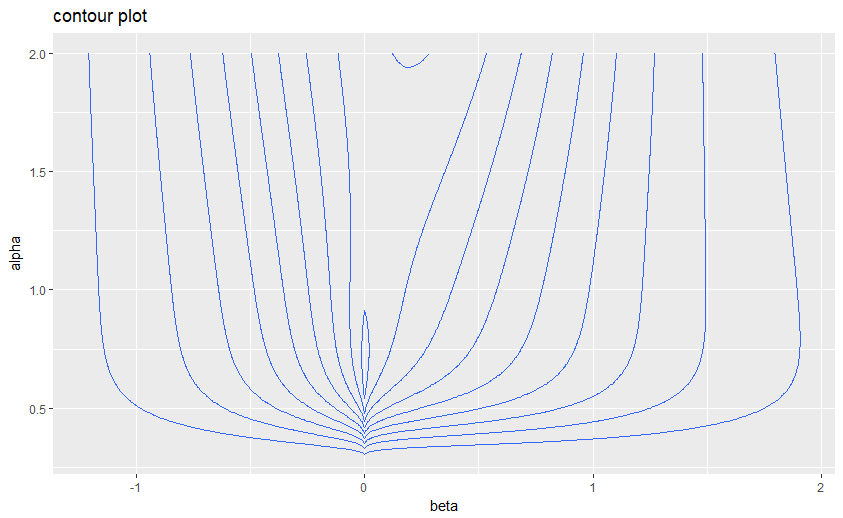}
 \caption{$k_1=0.1$, $k_2=2$.}
 \label{fig:beta_alpha_margin_1}
\end{figure}

\begin{figure}[ht!]
 \centering\includegraphics[scale=.5]{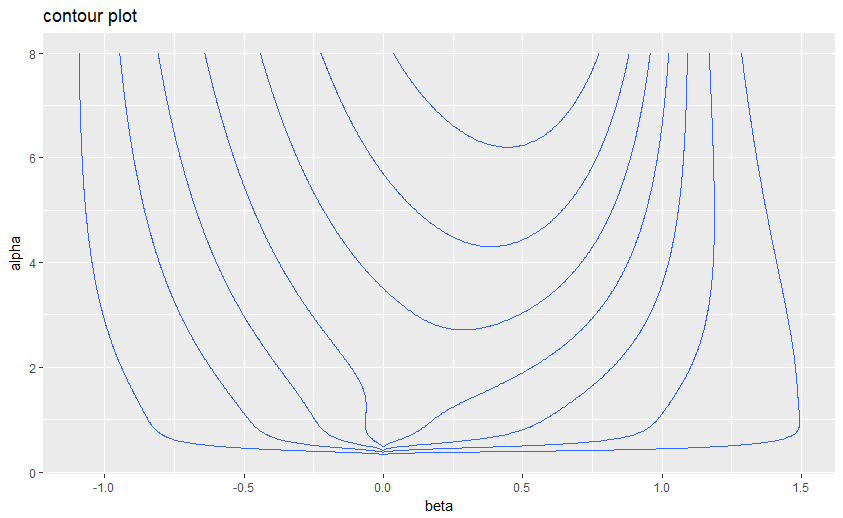}
 \caption{$k_1=0.1$, $k_2=8$.}
 \label{fig:beta_alpha_margin_3}
\end{figure}

\FloatBarrier
\section{Theory}
\label{sec:theory}
In this section we consider two theoretical properties of the posterior.  We begin by establishing a tail robustness property similar to that of the horseshoe prior and then we turn our attention to posterior consistency. 

\subsection{Tail Robustness}
\label{sec:shrinkage}
Consider the following  one-dimensional case version of the model above
\begin{align*}
  Y | \beta & \sim \text{N}(\beta, 1)\\
  \nu(\beta \vert \gamma, \lambda,\alpha) &= \dfrac{\alpha(\gamma)^{1/\alpha}}{2^{1/\alpha+1}\Gamma(1/\alpha)}
                                            \lambda^{1/\alpha}
                                            \exp\left\{-\frac{\gamma}{2}\lambda|\beta|^\alpha\right\}.\\
 \lambda & \sim (1-\kappa)\text{Gamma}(\lambda; e_1, f_1) + \kappa \text{Gamma}(\lambda; e_2, f_2)\\
 \alpha &\sim \text{Unif}(k_1, k_2) \\
 \kappa &\sim \text{Bern}(1/2) .
\end{align*}
Let $m(y)$ be the marginal density achieved by integrating over all the parameters.  A standard calculation shows that the marginal posterior mean of $\beta$ satisfies
$$E(\beta|y) = y + \frac{d}{dy}\log m(y)$$ 
and hence the following result shows that our priors satisfy a tail-robustness property.

\begin{theorem}
\label{thm:shrinkage}
There is some $C_h$ which depends on the hyperparameters such that 
$|y-E(\beta|y)|\leq C_{h}$ and 
$$\lim_{|y|\rightarrow \infty}\frac{d}{dy}\log m(y) = 0 \, .$$
\end{theorem}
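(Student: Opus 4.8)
The plan is to reduce both assertions to a single statement about the score $s(y):=\frac{d}{dy}\log m(y)=m'(y)/m(y)$. The identity recalled just before the theorem gives $y-E(\beta\mid y)=-s(y)$, so it is enough to show that $s$ is bounded on $\mathbb{R}$ and that $s(y)\to0$ as $|y|\to\infty$. Let $\phi$ denote the standard normal density and let $\pi$ be the marginal prior of $\beta$ obtained by integrating the conditional prior against the priors on $\gamma,\lambda,\alpha,\kappa$; then $m(y)=\int\phi(y-\beta)\pi(\beta)\,d\beta$, and since $\partial_y\phi(y-\beta)=(\beta-y)\phi(y-\beta)$,
\[
s(y)=\frac{\int(\beta-y)\phi(y-\beta)\pi(\beta)\,d\beta}{\int\phi(y-\beta)\pi(\beta)\,d\beta}=\frac{\int u\,\phi(u)\,\pi(y+u)\,du}{\int\phi(u)\,\pi(y+u)\,du}
\]
after the substitution $u=\beta-y$. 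Since $\pi$ is symmetric it suffices to treat $y\to+\infty$.

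The crux is that $\pi$ has regularly varying (polynomial) tails. I would first integrate out the scale parameters: with $\gamma\sim\text{Gamma}(e_3,f_3)$ the integral over $\gamma$ turns $\exp\{-\tfrac{\gamma\lambda}{2}|\beta|^\alpha\}$ into a factor proportional to $(f_3+\tfrac{\lambda}{2}|\beta|^\alpha)^{-(1/\alpha+e_3)}$, and integrating the result against the Gamma mixture for $\lambda$ leaves, for each fixed $\alpha$, a density whose tail is a power law $\nu(\beta\mid\alpha)\asymp|\beta|^{-(1+\alpha\theta)}$ for a positive $\theta$ determined by $e_1,e_2,e_3$; in short, the scale mixtures convert the stretched-exponential factor into polynomial decay. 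Averaging over $\alpha\sim\text{Unif}(k_1,k_2)$ and writing $|\beta|^{-(1+\alpha\theta)}=|\beta|^{-1}e^{-\alpha\theta\log|\beta|}$, the heaviest tail is produced by the smallest exponent, i.e.\ by $\alpha=k_1$; a Watson's-lemma estimate at that endpoint then shows that $\pi$ is regularly varying of some index $-\rho$ with $\rho=1+k_1\theta>1$ (up to a slowly varying logarithmic correction). In particular $\pi(y+u)/\pi(y)\to1$ as $y\to\infty$ for every fixed $u$.

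To obtain the limit I would divide numerator and denominator of $s(y)$ by $\pi(y)$ and pass to the limit under the integral. Pointwise the ratio tends to $1$; for domination I would split the $u$-axis at $-y/2$. On $\{u\ge-y/2\}$ Potter's bounds furnish a constant $M$, depending only on $\rho$, with $\pi(y+u)/\pi(y)\le M$ for all large $y$, so the integrand is dominated by the integrable function $M|u|\phi(u)$ and dominated convergence applies. On $\{u<-y/2\}$ one has $\phi(u)\le\phi(y/2)$, so this piece is $O(e^{-y^2/8})$, which overwhelms the merely polynomial smallness of $\pi(y)^{-1}$ and the integrable singularity of $\pi$ near the origin; hence it is negligible. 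Therefore
\[
\lim_{y\to\infty}s(y)=\frac{\int u\,\phi(u)\,du}{\int\phi(u)\,du}=0,
\]
which is the second claim.

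For the uniform bound, note that $\pi>0$ on $\mathbb{R}$ gives $0<m(y)\le(2\pi)^{-1/2}$, that the posterior mean is finite (the Gaussian factor dominates the polynomial tail of $\pi$), and that $s$ is continuous, differentiation under the integral being justified by the same Gaussian domination. A continuous function that vanishes at $\pm\infty$ is bounded, so $|s(y)|\le C_h$ with $C_h$ fixed by the hyperparameters entering $\pi$, which is the first claim. I expect the tail step to be the main obstacle: both the Watson's-lemma asymptotics through the $\alpha$-integration (establishing regular variation with the correct slowly varying correction) and the uniform domination on $\{u<-y/2\}$, where $y+u$ passes through a neighborhood of the origin at which $\pi$ may be unbounded, require the most care.
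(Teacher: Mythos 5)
Your proposal is correct in substance but follows a genuinely different route from the paper's own proof. The two arguments share the same starting insight---integrating the Gamma mixture over $\lambda$ turns the kernel $\exp\{-\lambda|\beta|^\alpha/2\}$ into the polynomial tail $(|\beta|^\alpha/2+f)^{-e-1/\alpha}$---and both split the integration range at $y/2$ so that the Gaussian factor annihilates the far region. After that they diverge. You integrate over $\alpha$ as well, producing a single marginal prior $\pi$, establish via endpoint Laplace/Watson asymptotics at $\alpha=k_1$ that $\pi$ is regularly varying of index $-(1+k_1e)$ up to a $1/\log$ correction, and then invoke Karamata machinery: the uniform convergence theorem for the pointwise limit $\pi(y+u)/\pi(y)\to1$, Potter's bounds for domination on $\{u\ge -y/2\}$, and dominated convergence to conclude. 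The paper instead keeps the $\alpha$-integral on the outside: it symmetrizes in $t=\beta-y$, lower-bounds the denominator by $C_1(y^\alpha/2+f)^{-e-1/\alpha}$, upper-bounds the numerator by an explicit derivative comparison ($S(t)<V(t)$ on $[0,y/2]$) plus the tail term $f^{-e-1/\alpha}e^{-y^2/8}$, and finishes by elementary monotonicity bounds on the ratio of $\alpha$-integrals, obtaining the quantitative rate $m'(y)/m(y)=O(1/y)$. Your route is more conceptual and transfers immediately to any prior with regularly varying tails; the paper's is self-contained, uses no regular-variation theory, and delivers an explicit decay rate rather than a bare limit (which also makes the bound $|y-E(\beta|y)|\le C_h$ more concrete than your compactness argument). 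Two small corrections: in the one-dimensional model of Section~\ref{sec:shrinkage} the precision $\gamma$ carries no prior (the paper's proof fixes $\gamma=1$), so your preliminary integration over $\gamma\sim\text{Gamma}(e_3,f_3)$ should be dropped---with $\gamma$ fixed the tail exponent is exactly $1+\alpha e$ when $e_1=e_2=e$; and your concern about a possible singularity of $\pi$ at the origin is moot, since the mixed prior is bounded there (the Gamma mixture has finite $E[\lambda^{1/\alpha}]$), which lets you bound the far region simply by $\sup_\beta\pi(\beta)\int_{|u|>y/2}|u|\phi(u)\,du=O(e^{-y^2/8})$.
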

\begin{proof}
See Appendix~\ref{app:proof:shrinkage}.
\end{proof}

\subsection{Posterior Consistency}
\label{sec:consistency}
We establish sufficient conditions for the posterior to concentrate near the true regression coefficients as the dimension grows with sample size.  We slightly modify our notation to make the dependence on the sample size explicit.  Let $\theta_n = \{ \lambda_n , \, \alpha_n, \, \kappa_n\}$ and let $h_n$ denote all of the hyperparameters.  Then the full posterior distribution is denoted $Q_n (\beta_n , \theta_n | y_n, h_n)$ since, in this section, we assume the precision $\gamma$ is known.  We will establish both consistency with respect to the marginal $Q_n (\beta_n | y_n , h_n)$ and consistency with respect to the conditional $Q_n (\beta_n | \theta_n, y_n, h_n)$.

We make the following assumptions throughout this section (i) $p_n = o(n)$, as $n \to \infty$; (ii) if $\Lambda_{n\min}$ and $\Lambda_{n\max}$ are the smallest and the largest singular values of $X_n$, respectively, then $0<\Lambda_{\min}<\liminf_{n \rightarrow \infty} \Lambda_{n\min}/\sqrt{n}\leq\limsup_{n\rightarrow\infty}\Lambda_{n\max}/\sqrt{n}<\Lambda_{\max}<\infty$; (iii) if $\beta_n^0$ is the true regression parameter, then $\sup_{j=1,\cdots,p_n}|\beta_{nj}^0| <\infty$; and  if $m_n$ denotes the number of nonzero elements in $\beta_n^0$, then $m_n=o\{n^{1-\rho}/(p_n\log^2 n)\}$, as $n \to \infty$, for $\rho\in (0,1)$.  Finally, let $F_{\beta^0_n}$ denote the distribution at \eqref{eq:likelihood} under the true regression parameter and for $\epsilon >0$ set
\[
B_{n,\epsilon} = \{ \beta_n \, : \, \|\beta_n - \beta^0_n\| > \epsilon\}\, .
\] 
We are now in position to state our  result on conditional consistency.

\begin{theorem}
\label{thm:conditional}
If, for each $j\in [1, p_n]$,  $\lambda_{nj}=(C\sqrt{p_n}n^{\rho/2}\log n)^{\alpha_n}$ for finite $C>0$, then for any $\epsilon >0$, as $n\rightarrow\infty$,
\[
Q_n (B_{n, \epsilon} | \theta_n, y_n, h_n) \to 0 ~~~~F_{\beta^0_n}-\text{almost surely} \; .
\]
\end{theorem}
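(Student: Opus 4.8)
The plan is to write the conditional posterior probability of the complement of the consistency neighborhood as a ratio of integrals of the likelihood ratio against the (conditional) prior, and then show that this ratio tends to zero almost surely by bounding the numerator from above and the denominator from below on events of overwhelming probability. Conditionally on $\theta_n=\{\lambda_n,\alpha_n,\kappa_n\}$ with $\gamma$ known, the posterior density is proportional to $R_n(\beta_n)\,\pi_n(\beta_n)$, where $R_n(\beta_n)=f_{\beta_n}(y_n)/f_{\beta_n^0}(y_n)$ is the Gaussian likelihood ratio and $\pi_n$ is the prior density (\ref{eq:beta prior}) with $\lambda_{nj}=(C\sqrt{p_n}n^{\rho/2}\log n)^{\alpha_n}$. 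Writing $y_n=X_n\beta_n^0+\varepsilon_n$ with $\varepsilon_n\sim\mathrm{N}(0,\gamma^{-1}I_n)$ and $u=\beta_n-\beta_n^0$, one has $\log R_n(\beta_n)=-\tfrac{\gamma}{2}\|X_nu\|^2+\gamma\,\varepsilon_n^{T}X_nu$, so that
\[
Q_n(B_{n,\epsilon}\mid\theta_n,y_n,h_n)=\frac{\int_{B_{n,\epsilon}}R_n(\beta_n)\pi_n(\beta_n)\,d\beta_n}{\int R_n(\beta_n)\pi_n(\beta_n)\,d\beta_n}=:\frac{N_n}{D_n}.
\]

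For the numerator I would exploit the Hellinger affinity of the Gaussian family, $E_{\beta_n^0}\sqrt{R_n(\beta_n)}=\exp\{-\tfrac{\gamma}{8}\|X_n(\beta_n-\beta_n^0)\|^2\}$. By assumption (ii), $\|X_n(\beta_n-\beta_n^0)\|^2\ge\Lambda_{n\min}^2\|\beta_n-\beta_n^0\|^2\ge\Lambda_{\min}^2\epsilon^2 n$ for large $n$ whenever $\beta_n\in B_{n,\epsilon}$, so Fubini and $\int\pi_n\le 1$ give $E_{\beta_n^0}\int_{B_{n,\epsilon}}\sqrt{R_n}\,\pi_n\,d\beta_n\le e^{-\gamma\Lambda_{\min}^2\epsilon^2 n/8}$. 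A Markov bound makes the event $\{\int_{B_{n,\epsilon}}\sqrt{R_n}\pi_n>e^{-\gamma\Lambda_{\min}^2\epsilon^2 n/16}\}$ have summable probability, so Borel--Cantelli forces this integral below $e^{-\gamma\Lambda_{\min}^2\epsilon^2 n/16}$ almost surely for large $n$. To pass from $\sqrt{R_n}$ to $R_n$ I would use the deterministic bound $\sqrt{R_n(\beta_n)}\le\exp\{\tfrac{\gamma}{4}\|P_n\varepsilon_n\|^2\}$, uniform in $\beta_n$, obtained by maximizing $\tfrac12\log R_n$ over $u$, where $P_n$ projects onto the column space of $X_n$. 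Since $X_n$ has full column rank by (ii), $\gamma\|P_n\varepsilon_n\|^2\sim\chi^2_{p_n}$, and chi-squared concentration plus Borel--Cantelli give $\tfrac{\gamma}{4}\|P_n\varepsilon_n\|^2\le C p_n$ almost surely eventually. Hence $N_n\le e^{Cp_n}e^{-\gamma\Lambda_{\min}^2\epsilon^2 n/16}\le e^{-c_1 n}$ for some $c_1>0$, almost surely for large $n$, because $p_n=o(n)$.

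For the denominator I would restrict the integral to a ball $A_n=\{\|\beta_n-\beta_n^0\|\le\delta\}$ with $\delta$ a small constant. On $A_n$ one has $\log R_n(\beta_n)\ge-\tfrac{\gamma}{2}\Lambda_{n\max}^2\delta^2-\gamma\|X_n^{T}\varepsilon_n\|\,\delta$; a Gaussian concentration bound on the quadratic form $\|X_n^{T}\varepsilon_n\|^2=\varepsilon_n^{T}X_nX_n^{T}\varepsilon_n$ yields $\|X_n^{T}\varepsilon_n\|\le C\sqrt{n p_n\log n}$ almost surely eventually, so (noting that the conditions force $p_n\log^2 n=o(n^{1-\rho})$, whence $\sqrt{np_n\log n}=o(n)$) the cross term is $o(n)$ and $\log R_n\ge-\tfrac{\gamma}{2}\Lambda_{\max}^2\delta^2 n-o(n)$ uniformly on $A_n$. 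For the prior mass I would lower bound $\pi_n(A_n)$ by the mass of the box $\prod_j[\beta_{nj}^0-\delta/\sqrt{p_n},\beta_{nj}^0+\delta/\sqrt{p_n}]\subset A_n$. Taking logs splits this into a normalizing/volume contribution of order $p_n\log n=o(n)$ and a penalty contribution $\tfrac{\gamma}{2}\sum_j\lambda_{nj}(|\beta_{nj}^0|+\delta/\sqrt{p_n})^{\alpha_n}$; the $m_n$ nonzero coordinates contribute at most a constant multiple of $m_n(C\sqrt{p_n}n^{\rho/2}\log n)^{\alpha_n}$, and the zero coordinates at most $p_n(C\delta\,n^{\rho/2}\log n)^{\alpha_n}$. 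With $m_n=o\{n^{1-\rho}/(p_n\log^2 n)\}$ and $p_n\log^2 n=o(n^{1-\rho})$, both are $o(n)$, so $\log D_n\ge-\tfrac{\gamma}{2}\Lambda_{\max}^2\delta^2 n-o(n)$ almost surely eventually.

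Combining, $\log(N_n/D_n)\le-c_1 n+\tfrac{\gamma}{2}\Lambda_{\max}^2\delta^2 n+o(n)$; choosing $\delta$ small enough that $\tfrac{\gamma}{2}\Lambda_{\max}^2\delta^2<c_1$ drives $\log(N_n/D_n)\to-\infty$, giving the claimed almost-sure convergence. The main obstacle is the denominator: because the scale $\lambda_{nj}^{1/\alpha_n}=C\sqrt{p_n}n^{\rho/2}\log n$ diverges, the prior assigns exponentially little mass near the nonzero true coefficients, so $D_n$ itself can decay exponentially in $n$. Keeping this decay strictly slower than the $\Theta(n)$ separation that the eigenvalue condition (ii) forces on $B_{n,\epsilon}$ is precisely what the sparsity rate $m_n=o\{n^{1-\rho}/(p_n\log^2 n)\}$ guarantees, by bounding the total penalty at $\beta_n^0$ by $o(n)$. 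A secondary technical point is upgrading the in-probability control of $\|P_n\varepsilon_n\|^2$ and $\|X_n^{T}\varepsilon_n\|$ to almost-sure statements, which follows from Gaussian/chi-squared concentration together with Borel--Cantelli.
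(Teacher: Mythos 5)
Your proof is essentially correct, but it follows a genuinely different route from the paper's. The paper does not bound likelihood-ratio integrals at all: it proves a prior small-ball lemma (Lemma~\ref{lem:ineq}), lower-bounding $\nu_n\{\beta_n:\|\beta_n-\beta_n^0\|<\Delta/n^{\rho/2}\mid\theta_n\}$ by treating the $q_n$ nonzero coordinates with an interval/density bound (much as you do) but the zero coordinates with a Chebyshev bound using the prior second moment $E(\beta_{nj}^2\mid\theta_n)=\Gamma(3/\alpha_n)\Gamma(1/\alpha_n)^{-1}(\gamma\lambda_{nj})^{-2/\alpha_n}4^{1/\alpha_n}$ (a factor that tends to one under the stated choice of $\lambda_{nj}$); it then shows the negative logarithm of this bound is $o(n)$, and finally invokes Theorem~1 of \cite{arma:etal:2013b}, which converts exactly this prior-concentration condition, under assumptions (i)--(iii), into almost-sure posterior consistency. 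All of the machinery you build by hand --- the decomposition $Q_n(B_{n,\epsilon}\mid\theta_n,y_n,h_n)=N_n/D_n$, the exponential upper bound on $N_n$, the almost-sure exponential lower bound on $D_n$ --- is outsourced to that citation. Within that machinery you also make different choices: you replace the cited result's explicit tests (built from the OLS estimator) with the Hellinger-affinity identity $E_{\beta_n^0}\sqrt{R_n}=\exp\{-\gamma\|X_n(\beta_n-\beta_n^0)\|^2/8\}$ combined with the uniform bound $\sqrt{R_n}\le\exp\{\gamma\|P_n\varepsilon_n\|^2/4\}$, and you replace the paper's shrinking ball of radius $\Delta/n^{\rho/2}$ (inside which the log-likelihood-ratio deficit is automatically $o(n)$) with a fixed-radius ball of radius $\delta$, absorbing the resulting $\Theta(n)$ term $\gamma\Lambda_{\max}^2\delta^2 n/2$ by taking $\delta$ small relative to the numerator's decay rate. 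Both devices work. What each buys: the paper's argument is short and its lemma is reused verbatim for the marginal result (Theorem~\ref{thm:marginal}), while yours is self-contained and makes explicit why the sparsity rate $m_n=o\{n^{1-\rho}/(p_n\log^2 n)\}$ is precisely what keeps the prior penalty at $\beta_n^0$ sub-linear in $n$.

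Two small repairs are needed. First, your Borel--Cantelli step for $\gamma\|P_n\varepsilon_n\|^2\sim\chi^2_{p_n}$ fails if $p_n$ stays bounded, since $P(\chi^2_{p_n}>Cp_n)$ is then a fixed positive constant and not summable; use a threshold of the form $C(p_n+\log n)$ instead (still $o(n)$), for which the tail bound $P(\chi^2_p\ge x)\le e^{-x/4}$ for $x\ge 8p$ gives summability. The same fix feeds into your bound $\|X_n^T\varepsilon_n\|\le C\sqrt{n(p_n+\log n)}$, which suffices since this is $o(n)$ under $p_n\log^2 n=o(n^{1-\rho})$. Second, your claim that the box penalty terms $m_n(C\sqrt{p_n}n^{\rho/2}\log n)^{\alpha_n}$ and $p_n(C\delta n^{\rho/2}\log n)^{\alpha_n}$ are $o(n)$ implicitly uses $(\,\cdot\,)^{\alpha_n}\le(\,\cdot\,)^2$, i.e.\ $\alpha_n\le 2$; this restriction is equally implicit in the paper's own verification that its third and fourth terms are $o(n)$, so it is not a gap relative to the paper, but it should be stated since the prior on $\alpha$ permits $\alpha_n>2$.
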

\begin{proof}
See Appendix~\ref{app:proofs}.
\end{proof}
Next we address marginal posterior consistency. 
\begin{theorem}
\label{thm:marginal}
If, for each $j\in [1, p_n]$, each element $\lambda_{nj}=(C_n \sqrt{p_n}n^{\rho/2}\log n)^{\alpha_n}$ for $C_n>0$ and $C_n^2 =o(n)$, then for any $\epsilon>0$, as $n \to \infty$,
\[
Q_n (B_{n, \epsilon} |y_n, h_n) \to 0 ~~~~F_{\beta^0_n}-\text{almost surely}\; .
\]
\end{theorem}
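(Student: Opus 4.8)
The plan is to deduce marginal consistency from the conditional statement of Theorem~\ref{thm:conditional} by averaging the conditional posterior probability over the posterior of the nuisance parameters. The tower property for the full posterior gives
\[
Q_n(B_{n,\epsilon}\mid y_n,h_n)=\int Q_n(B_{n,\epsilon}\mid\theta_n,y_n,h_n)\,Q_n(d\theta_n\mid y_n,h_n),
\]
so it suffices to show that the integrand is uniformly negligible on the bulk of the posterior mass of $\theta_n=\{\lambda_n,\alpha_n,\kappa_n\}$ and that the residual mass is itself negligible. First I would partition the space of $\theta_n$ into a good set $G_n$, consisting of configurations for which each $\lambda_{nj}=(C\sqrt{p_n}n^{\rho/2}\log n)^{\alpha_n}$ with $C\le C_n$, $C_n^2=o(n)$, and $\alpha_n\in[k_1,k_2]$, together with its complement $G_n^c$. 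On $G_n^c$ I bound the conditional probability crudely by $1$, reducing that piece to showing $Q_n(G_n^c\mid y_n,h_n)\to0$, $F_{\beta_n^0}$-almost surely.

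On the good set the aim is a \emph{uniform} version of Theorem~\ref{thm:conditional}, namely that $Q_n(B_{n,\epsilon}\mid\theta_n,y_n,h_n)\to0$ uniformly over $\theta_n\in G_n$. Here I would revisit the ratio-of-integrals representation underlying the conditional proof,
\[
Q_n(B_{n,\epsilon}\mid\theta_n,y_n,h_n)=\frac{\int_{B_{n,\epsilon}}\exp\{\ell_n(\beta_n)-\ell_n(\beta_n^0)\}\,\nu(\beta_n\mid\theta_n)\,d\beta_n}{\int\exp\{\ell_n(\beta_n)-\ell_n(\beta_n^0)\}\,\nu(\beta_n\mid\theta_n)\,d\beta_n},
\]
and track the dependence on $C$. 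The numerator is controlled by the exponential (large-deviation) decay of the Gaussian likelihood ratio away from $\beta_n^0$, which through the singular-value bounds in assumption (ii) is uniform in $\theta_n$; the denominator is bounded below by a prior-mass (Kullback--Leibler support) argument on a shrinking neighborhood of $\beta_n^0$. The only place the magnitude of $\lambda$ enters materially is the penalty $\exp\{-\tfrac{\gamma}{2}\sum_j\lambda_{nj}|\beta_{nj}|^{\alpha_n}\}$ in that lower bound, and replacing the fixed $C$ by a growing $C_n$ inflates it. Imposing $C_n^2=o(n)$ is exactly what keeps the resulting loss in prior concentration of smaller order than the $n$-scale likelihood gain, so the conditional bound still vanishes uniformly over $G_n$.

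The second ingredient is posterior concentration of $\theta_n$ on $G_n$. Since $\alpha_n$ ranges over the compact interval $[k_1,k_2]$, the substantive task is to show the marginal posterior of $\lambda_n$ places vanishing mass on configurations with some $\lambda_{nj}$ far outside the band defining $G_n$. I would obtain this from the same ratio device applied to $\lambda_n$: the mixing Gamma prior \eqref{eq:prior_lambda} puts appreciable mass in the relevant range, while the integrated likelihood $\int\exp\{\ell_n-\ell_n^0\}\,\nu(\beta_n\mid\theta_n)\,d\beta_n$ down-weights values of $\lambda_n$ with $C_n^2$ of order $n$ or larger, forcing the posterior onto $G_n$. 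Combining the two pieces with the crude bound on $G_n^c$ then yields $Q_n(B_{n,\epsilon}\mid y_n,h_n)\to0$.

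I expect the main obstacle to be the uniformity step: carrying the argument of Theorem~\ref{thm:conditional} through \emph{simultaneously} for all $\lambda$ in a band whose width grows with $n$. The delicate accounting is to verify that the prior-mass floor on the denominator degrades only at the rate governed by $C_n^2/n=o(1)$, so that the gap between the exponential likelihood decay in the numerator and the prior-concentration floor in the denominator does not close. As in the conditional proof, upgrading convergence from in-probability to $F_{\beta_n^0}$-almost-sure will rest on a Borel--Cantelli argument applied to the large-deviation bounds for the Gaussian likelihood ratio, now made uniform over $G_n$.
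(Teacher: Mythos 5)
There is a genuine gap, and it stems from a misreading of what the theorem assumes. Your tower-property decomposition makes the proof hinge on two unproven lemmas: (i) posterior concentration of $\theta_n$ on the good set $G_n$, and (ii) conditional consistency holding \emph{uniformly and almost surely} over $G_n$. For (i), your justification --- that the integrated likelihood ``down-weights'' values of $\lambda_n$ with $C_n^2$ of order $n$ --- is an assertion, not an argument; posterior contraction for hyperparameters is a substantial problem in its own right, and nothing in the paper's machinery delivers it. It is also unnecessary: the hypothesis of Theorem~\ref{thm:marginal} pins every $\lambda_{nj}$ to the form $(C_n\sqrt{p_n}n^{\rho/2}\log n)^{\alpha_n}$ with $C_n^2=o(n)$, so the prior on $\theta_n$ already lives entirely on your good set; the mixture prior \eqref{eq:prior_lambda} that you invoke plays no role in this theorem, and $G_n^c$ has prior, hence posterior, mass zero. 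For (ii), the delicate point is not just uniformity of the bounds but the exceptional null sets: Theorem~\ref{thm:conditional} gives, for each fixed $\theta_n$, an $F_{\beta_n^0}$-null set off which convergence holds, and these sets vary with $\theta_n$ over an uncountable range. One cannot take a supremum over $\theta_n$ without reproving the underlying consistency theorem with tests and prior-mass bounds that are uniform in $\theta_n$ (so that a single Borel--Cantelli argument yields a common null set). You correctly flag this as the main obstacle, but the proposal does not resolve it.

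The paper's route avoids all of this. Since the likelihood $f(y_n|\beta_n)$ does not involve $\theta_n$, the marginal posterior of $\beta_n$ is \emph{exactly} the posterior obtained from the marginal prior $\bar\nu_n(\cdot)=\int\nu_n(\cdot\,|\,\theta_n)\,\nu_n(d\theta_n)$; no averaging of conditional posteriors against the posterior of $\theta_n$ is needed. Consequently the same general result cited in the proof of Theorem~\ref{thm:conditional} (Theorem 1 of \cite{arma:etal:2013b}) applies verbatim, and the only thing to check is the prior-mass condition for $\bar\nu_n$: because $\nu_n(d\theta_n)$ is proper, $\bar\nu_n\{\beta_n:\|\beta_n-\beta_n^0\|<\Delta/n^{\rho/2}\}\ge\inf_{\theta_n}\nu_n\{\beta_n:\|\beta_n-\beta_n^0\|<\Delta/n^{\rho/2}\,|\,\theta_n\}$, and this infimum exceeds $e^{-dn}$ by exactly the computation in Theorem~\ref{thm:conditional}'s proof via Lemma~\ref{lem:ineq}, with $C$ replaced by $C_n$ --- the condition $C_n^2=o(n)$ enters only to keep those logarithmic terms $o(n)$. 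If you wish to salvage your architecture, note that under the actual hypotheses your step on $G_n^c$ is vacuous, and your remaining step is precisely a uniform-in-$\theta_n$ reworking of the general consistency theorem; that can be done, but it is strictly more work than the paper's three-line reduction to the marginal prior.
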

\begin{proof}
See Appendix~\ref{app:proofs}.
\end{proof}

\section{Estimation and Prediction}
\label{sec:est and pred}
The hierarchical model gives rise to a posterior density characterized by
\begin{equation}
\label{eq:post}
q(\beta, \gamma, \lambda, \alpha, \kappa |y) \propto f(y|\beta) \nu(\beta|\alpha, \lambda) \nu(\lambda|\kappa) \nu(\gamma) \nu(\alpha) \nu(\kappa) 
\end{equation}
and which yields marginal density $q(\beta | y)$.  Under squared error loss, the Bayes (optimal) estimator of the regression coefficients is $\hat{\beta}=E[\beta|y]$.  Interval estimates can be constructed from quantiles of the posterior marginal distribution of $\beta |y$. Similarly, we can estimate and make inference about the other parameters through the appropriate marginal distributions.

Under squared error loss prediction of a future observation $\tilde{Y}$ is based on the mean of the posterior predictive distribution
\begin{equation}
\label{eq:post_pred}
E[\tilde{Y}|y] = \int \tilde{y} q(\tilde{y}|y) d\tilde{y} = \int \tilde{y} f(\tilde{y}|\beta, \gamma) q(\beta, \gamma, \lambda, \alpha, \kappa|y) d\beta\, d\gamma\, d\lambda\, d\alpha\, d\kappa\, d\tilde{y} \; .
\end{equation}
A routine calculation shows that if $\tilde{X}$ corresponds to a new observation, then $E[\tilde{Y}|y] =\tilde{X} E[\beta|y] = \tilde{X} \hat{\beta}$.  Interval estimates can be constructed from quantiles of the posterior predictive distribution.

Unfortunately, calculation of $\hat{\beta}$ and quantiles of  posterior marginals or the posterior predictive distribution $q(\beta|y)$ and $q(\tilde{y}|y)$ is analytically intractable so we will have to resort to Monte Carlo methods, which are considered in the sequel.

\subsection{Markov Chain Monte Carlo}
\label{sec:mcmc}
We develop a deterministic scan component-wise MCMC algorithm with invariant density $q(\beta, \gamma, \lambda, \alpha, \kappa |y)$ which consists of a mixture of Gibbs updates and Metropolis-Hastings updates.  To begin we require the posterior full conditionals.  Let $\beta_{-i}$ be all of the entries of $\beta$ except $\beta_{i}$.  Then
\begin{align}
q(\beta_i \vert  \beta_{-i}, \alpha, \gamma, \lambda_i) \propto \exp\left\{-\frac{\gamma\lambda_i}{2}\vert\beta_i\vert^{\alpha}\right\} \exp\left\{-\frac{\gamma}{2}( Y- X \beta)^T( Y- X\beta)\right\}\, ,
\label{eq:beta2}
\end{align}
\begin{align}
q(\alpha \vert   \beta, \gamma, \lambda_i) \propto \left(\dfrac{\alpha \gamma^{1/\alpha}}{2^{1/\alpha} \Gamma(1/\alpha)}\right)^{p} \left(\prod_{i=1}^p\lambda_i\right)^{1/\alpha} \exp\left\{-\frac{\gamma}{2}\sum_{i=1}^p\lambda_i|\beta_i|^\alpha\right\} \nu(\alpha) \, ,
\label{eq:alpha2}
\end{align}
\begin{align*}
\gamma \vert  \beta, \alpha, \lambda_i  \sim\text{Gamma} \left(e_3+ \frac{n}{2}+ \frac{p}{\alpha}, f_3 + \frac{1}{2} \left[( Y- X\beta)^T( Y- X\beta)+ \sum_{i=1}^p\lambda_i|\beta_i|^\alpha \right] \right) \, ,
\end{align*}
\begin{align*}
\lambda_i \vert   \beta, \alpha, \gamma,  \kappa_i \sim (1-\kappa_i)\text{Gamma} \left(e_1+\frac{1}{\alpha}, f_1+\dfrac{\gamma}{2}|\beta_i|^\alpha \right)+\kappa_i\text{Gamma} \left(e_2+\frac{1}{\alpha}, f_2+\dfrac{\gamma}{2}|\beta_i|^\alpha \right) \, ,
\end{align*}
and
\begin{align*}
\kappa_i|\lambda_i \sim \text{Bern} \left( \frac{\omega_2}{\omega_1+\omega_2} \right) 
\end{align*}
where $\omega_1=\text{Gamma}(\lambda_i; e_1,f_1)$ and $\omega_2=\text{Gamma}(\lambda_i; e_2,f_2)$ are Gamma densities evaluated at $\lambda_i$.  We see that we can use Gibbs updates for $\gamma$, the $\lambda_i$ and the $\kappa_i$.  However, for the $\beta_i$ and $\alpha$ we will need Metropolis-Hastings updates, which are now described.
 
Consider updating $\beta_i$.  If $\beta_i^{(t)}$ is the current value at the $t$th iteration, then we will use a random walk Metropolis-Hastings update with proposal distribution $N(\beta_i^{(t)},v_b)$, where $v_b$ is chosen by the user, and invariant density given by \eqref{eq:beta2}. 

The MH update for $\alpha$ is straightforward. We use an independence Metropolis-Hastings sampler with  invariant density given by \eqref{eq:alpha2}.

Cycling through these updates for $M$ steps in the usual fashion yields an MCMC sample 
\[
\left\{\beta^{(t)}, \gamma^{(t)}, \alpha^{(t)}, \lambda^{(t)}, \kappa^{(t)} \right\}_{t=1}^{M} \, .
\]
Estimation is straightforward since the sample mean is strongly consistent for $E[\beta|y]$, that is, as $M \to \infty$,
\[
\frac{1}{M} \sum_{t=1}^{M} \beta^{(t)} \to E[\beta|y] ~~~\text{with probability} ~1
\]
and a sample quantile of the $\{\beta^{(t)}\}$ is strongly consistent for the corresponding quantile of the marginal distribution \citep{doss:etal:2014}.  

Prediction intervals for a new observation require a further Monte Carlo step.  Consider the posterior predictive density 
\[
q(\tilde{y}|y) = \int f(\tilde{y}|\beta, \gamma) q(\beta, \gamma, \lambda, \alpha, \kappa|y) d\beta\, d\gamma\, d\lambda\, d\alpha\, d\kappa =  \int f(\tilde{y}|\beta, \gamma) q(\beta, \gamma |y) d\beta\, d\gamma
\]
so that given the MCMC sample we can sample from $q(\tilde{y}|y)$ by drawing 
$\tilde{Y}^{(t)} \sim f(\cdot|\beta^{(t)}, \gamma^{(t)})$ for $t=1,\ldots, M$.  The sample quantiles of $\{\tilde{y}^{(t)}\}_{t=1}^{M}$ are then strongly consistent for the corresponding quantiles of the posterior predictive distribution.  

\begin{remark}
If interest lies in extreme quantiles, then importance sampling is preferred \citep[see e.g.][]{robe:case:2013}.  However, for standard settings such as .05 or .95 quantiles, then the approach suggested here will be much faster.  In fact, compared to the above approach, our implementation of importance sampling with a Cauchy instrumental distribution was more than 550 times slower in our examples from Section~\ref{sec:examples} and hence we do not pursue it further here.
\end{remark}

\section{Simulation Experiments}
\label{sec:examples}
\subsection{Simulation Scenarios}
\label{sec:scenarios}
We consider four scenarios: (I) a small number of large effects; (II) a small to moderate number of moderate-sized effects; (III) a large number of small effects; and (IV) a sparse setting with $p>n$.  For each scenario we independently repeat the following procedure 500 times.  We generate 1000 observations from a model and split them into a training set of size $n_{\text{train}}$ and a test set of size $n_{\text{test}}$.  We then fit the hierarchical model from Section~\ref{sec:hm} on the training data using the MCMC algorithm and estimation procedure from Section~\ref{sec:est and pred}.  The hyperparameters were taken to be $k_1=0.5$, $k_2=4$, $e_1 = f_1 = 1$, $e_2 = 40$, $f_2 = 0.5$, and $e_3=f_3=0.001$.  The MCMC algorithm is run for 1e5 iterations, a value which was chosen based on obtaining enough effective samples according to the procedure developed by \cite{vats:etal:2017}.  The MCMC procedure is not computationally onerous since in our most challenging simulation experiment it took only a few seconds to complete for a single data set.

 In each scenario, we generate data from the following linear model:
\begin{align*}
Y = X\beta^0 + 2\varepsilon~~~~\varepsilon\sim \text{N}(0,I)\, .
\end{align*}
We include an intercept so that the first column of the design matrix $X$ is a column of ones.  The remaining columns are generated from a multivariate normal distribution $N_{p-1}(0, \Sigma)$, where the diagonal entries of $\Sigma$ equal 1 and the off-diagonals are $0.5^{|i-j|}$ for all $i,\ j \geq 2$.
Notice that $\beta^0$ is $(p+1)\times1$.

\noindent\textit{Scenario I}.  We set $p=20$ and, in each replication, randomly choose 18 of the 20 coefficients to be 0, while the remaining two are independently sampled from a $\text{N}(15,3^2)$.  Here $n_{\text{train}}=100$ and  $n_{\text{test}}=900$.

\noindent\textit{Scenario II}.  We set $p=20$ and, in each replication, randomly choose 10 of the 20 coefficients to be 0, while the remainder are independently sampled from a $\text{N}(5, 1)$.  Here $n_{\text{train}}=100$ and  $n_{\text{test}}=900$.

\noindent\textit{Scenario III}.  We set $p=20$ and, in each replication, all the coefficients are independently sampled from a $\text{N}(2,0.001^2)$.  Here $n_{\text{train}}=100$ and  $n_{\text{test}}=900$.

\noindent\textit{Scenario IV}.  We set $p=150$ and, in each replication, randomly choose 142 of the 150 coefficients to be 0, while the remaining  are independently sampled from a $\text{N}(15,3^2)$.  Here $n_{\text{train}}=50$ and  $n_{\text{test}}=950$.

\begin{remark} 
While we assume Gaussian errors in our simulation experiments, we also investigated the situation where this assumption is violated.  In particular, we considered the case where $\varepsilon$ follows a Student's $t$-distribution with 5 degrees of freedom.  In this setting our method continued to provide reasonable estimation and prediction. In fact, the results were similar enough that we do not present them here in the interest of a concise presentation.  
\end{remark}

\subsection{Posterior of $\alpha |y$}
\label{sec:alpha}

Recall Figure~\ref{fig:alpha_posterior} which displays the estimated posterior density of $q(\alpha|y)$ for a single data set in each of the four scenarios. The results coincide nicely with previous conclusions \citep{tibs:1996, hastie:etal:ESL:2009}.  In scenario I and IV where the lasso is preferred, more mass is close to 1. In scenario III ridge regression should dominate the lasso and $\alpha$ is concentrated in the region between 1.8 and 2.  In scenario II ridge and lasso are often comparable with a small advantage for lasso.  In the data set displayed here the estimated density favors larger values of $\alpha$, but we will see that the performance Bayesian methods are comparable to the optimal frequentist method. Overall, the proposed approach has the ability to provide a posterior density curve of $\alpha|y$ which puts most if its mass near the optimal values of $\alpha$. 

\subsection{Estimation}
\label{sec:est}
We compare the generalized bridge prior model with a Bayesian lasso (i.e. the hierarchical model of Section~\ref{sec:hm} with $\alpha=1$) and a Bayesian ridge regression (i.e. the hierarchical model of Section~\ref{sec:hm} with $\alpha=2$).  We also compare to the frequentist lasso and ridge regression with the tuning parameters chosen by 10-fold cross validation. We also compare the proposed procedure with the spike-and-slab lasso \citep{rock:george:2016} and horseshoe prior \citep{carv:etal:2010} as benchmarks.   

 Table~\ref{tab:estimation_L2} reports the average $L_2$ distance between estimated coefficients and the truth. These results suggest that the hierarchical model dominates the others when large effects exist (scenarios I and IV), especially in scenario IV where $n<p$. Our approach dominates ridge and is comparable to the others in scenario II.  In the scenario III, our model is superior to the other four models while ridge regression dominates. Largely this appears to be due to the hierarchical model more aggressively shrinking small effects to zero. 

\begin{table}[htbp]\small
   \centering

      \caption{Average $||\hat{\beta}-\beta||_2$ with standard errors.}
   \begin{tabular}{lllllllll}
      \toprule
   &  \multicolumn{8}{c}{Scenario} \\
 \cmidrule(lr){2-9}
Method & \multicolumn{2}{c}{I} &\multicolumn{2}{c}{II}&\multicolumn{2}{c}{III}&\multicolumn{2}{c}{IV} \\
       \cmidrule(lr){2-3}\cmidrule(lr){4-5}\cmidrule(lr){6-7}\cmidrule(lr){8-9}
      B.P. & 0.477 & 0.009 & 2.151 & 0.027 & 2.801 & 0.025  & 1.369 & 0.026\\ 
B.L. & 0.474 & 0.009  & 2.035 & 0.027 & 3.131 & 0.027  & 2.357 & 0.223 \\  
B.R. & 0.590 & 0.009 & 2.253 & 0.029 & 2.870 & 0.025 & 1.933 & 0.086 \\
Lasso & 0.999 & 0.016 & 2.315 & 0.028 & 3.255 & 0.029 & 6.406 & 0.108 \\ 
Ridge & 12.333 & 0.041 & 6.560 & 0.021 & 0.780 & 0.006 & 45.252 & 0.024\\ 
SSLasso & 0.943 & 0.036 & 2.144 & 0.032 & 3.201 & 0.032 & 3.560 & 0.062 \\ 
Horseshoe & 0.597 & 0.012 & 2.051 & 0.026 & 4.620 & 0.032 & 1.820 & 0.030  \\     
  
  \bottomrule
   \end{tabular}
   \label{tab:estimation_L2}
\end{table}

%

\subsubsection{Estimation of Large Effects}
 We consider two additional scenarios to study what happens in the presence of large effects.  All of the settings remain the same as above, except as noted below.

\noindent\textit{Scenario V}. Set $p=40$, $n_{\text{train}}=200$ and $n_{\text{test}}=400$. Values of regressors other than the intercept are also drawn from a multivariate distribution $N_{p-1}(0, \Sigma)$, where the diagonal entries of $\Sigma$ equals 1 and all off-diagonals are $0.5$. The simulation true vector of coefficients is follows, 
\begin{align*}
\beta^{T} = (\underbrace{0, \cdots, 0}_{10},\underbrace{2, \cdots, 2}_{10}, \underbrace{0, \cdots, 0}_{10},\underbrace{2, \cdots, 2}_{10}).
\end{align*}

\noindent\textit{Scenario VI}. We have the same settings as in scenario V except that
\begin{align*}
\beta^{T} = (\underbrace{0, \cdots, 0}_{10},\underbrace{150, \cdots, 150}_{10}, \underbrace{0, \cdots, 0}_{10},\underbrace{150, \cdots, 150}_{10}).
\end{align*}

Table~\ref{tab:estimation_L2_extra} reports the average $L_2$ distance between estimated coefficients and the truth in scenarios V and VI. Our Bayesian methods are all comparable and all dominate both the lasso and ridge regressions. Besides the two classic penalized regression, the Bayesian ridge model also failed to handle shrinkage on large coefficients efficiently. It is well known that lasso and ridge regression produce highly biased estimates in the presence of large effects, however, this does not appear to happen in the generalized bridge prior model.  Both spike-and-slab lasso and horseshoe prior models are slightly better than the generalized bridge prior model, at least on average. However, the generalized bridge prior model captures the true model 165 and 500 times out of the 500 replications, respectively, in these two scenarios, while the spike-and-slab lasso recovers the true model 132 and 127 times, respectively, and 2 and 52 times, respectively, by the horseshoe prior model.

\begin{table}[htbp]\small
   \centering
      \caption{Average $||\hat{\beta}-\beta||_2$ with standard errors.}
   \begin{tabular}{lllll}
      \toprule
   &  \multicolumn{4}{c}{Scenario} \\
 \cmidrule(lr){2-5}
Method & \multicolumn{2}{c}{V} &\multicolumn{2}{c}{VI}\\
       \cmidrule(lr){2-3}\cmidrule(lr){4-5}
      B.P. & 1.351 & 0.008 &  1.040 & 0.009\\ 
B.L. & 1.332 & 0.008  & 1.928 & 0.010 \\  
B.R. & 1.346 & 0.008 & 83.471 & 4.785\\
Lasso & 1.429 & 0.021 & 1560.272 & 95.456 \\ 
Ridge & 3.111 & 0.054 & 1426.277 & 288.195 \\ 
SSLasso & 1.026 & 0.009 & 1.036 & 0.009\\ 
Horseshoe & 0.964 & 0.007 & 0.953 & 0.007 \\  
  \bottomrule
   \end{tabular}
   \label{tab:estimation_L2_extra}
\end{table}

\subsection{Prediction}
\label{sec:pred}
We turn our attention to prediction of a future observation.  The simulation results are based on the same simulated data as in Section~\ref{sec:est}.  Table~\ref{tab:prediction_mse} reports our simulation results.  The Bayesian approaches are comparable in all four scenarios, with the fully Bayesian approach being slightly better.  In scenario I the Bayesian methods dominate both lasso and ridge.  In scenario II the Bayesian methods are all comarable to lasso with ridge being substantially worse.  Ridge dominates in scenario III, but the other methods are comparable. In scenario IV, where $p > n$, both lasso and ridge regression are substantially worse.  

\begin{table}[htbp]\small
   \centering

      \caption{Median mean-squared error with standard errors.}
     \begin{tabular}{lllllllll}
      \toprule
   &  \multicolumn{8}{c}{Scenario} \\
 \cmidrule(lr){2-9}
Method & \multicolumn{2}{c}{I} &\multicolumn{2}{c}{II}&\multicolumn{2}{c}{III}&\multicolumn{2}{c}{IV} \\
       \cmidrule(lr){2-3}\cmidrule(lr){4-5}\cmidrule(lr){6-7}\cmidrule(lr){8-9}
      B.P. & 4.036 & 0.006 & 4.497 & 0.013  & 4.730 & 0.018  & 5.505 & 0.030\\ 
B.L. & 4.032 & 0.005  & 4.450 & 0.015 & 4.867 & 0.023  & 5.611 & 0.037 \\  
B.R. & 4.083 & 0.006  & 4.527 & 0.018 & 4.761 & 0.018  & 6.117 & 0.045  \\
Lasso  & 4.153 & 0.010 & 4.577 & 0.021 & 4.930 & 0.023  & 13.466 & 0.285 \\ 
Ridge & 16.385 & 0.111 & 7.413 & 0.043 & 4.329 & 0.011 & 1174.164 & 9.076\\ 
SSLasso & 4.118 & 0.019 & 4.510 & 0.021  & 4.919 & 0.023 & 7.977 & 0.122  \\ 
Horseshoe & 4.072 & 0.007  & 4.504 & 0.016 & 5.685 & 0.027 & 6.243 & 0.070  \\     
  
  \bottomrule
   \end{tabular}
   \label{tab:prediction_mse}
\end{table}

Figure~\ref{fig:pred_intvl} displays the posterior predictive densities and prediction intervals for a future observation.  In each graph, the dotted bell-shaped curve represents the true density function. The solid bell-shaped curve is the empirical posterior predictive distributions of $\tilde{Y}_1$.  The dotted line stands for the true $\tilde{Y}_1$s and the two dashed lines represent 2.5 and 97.5 percentiles respectively so that intervals in-between are the 95\%  intervals. Clearly the prediction interval contains the true value in each scenario.

\begin{figure}[htbp]
\caption{Posterior predictive intervals.  Scenario I is upper left, scenario II is upper right, scenario III  is lower left, and scenario IV is lower right.}
\label{fig:pred_intvl}
\begin{minipage}[b]{0.4\linewidth}
\centering
\includegraphics[width=\textwidth]{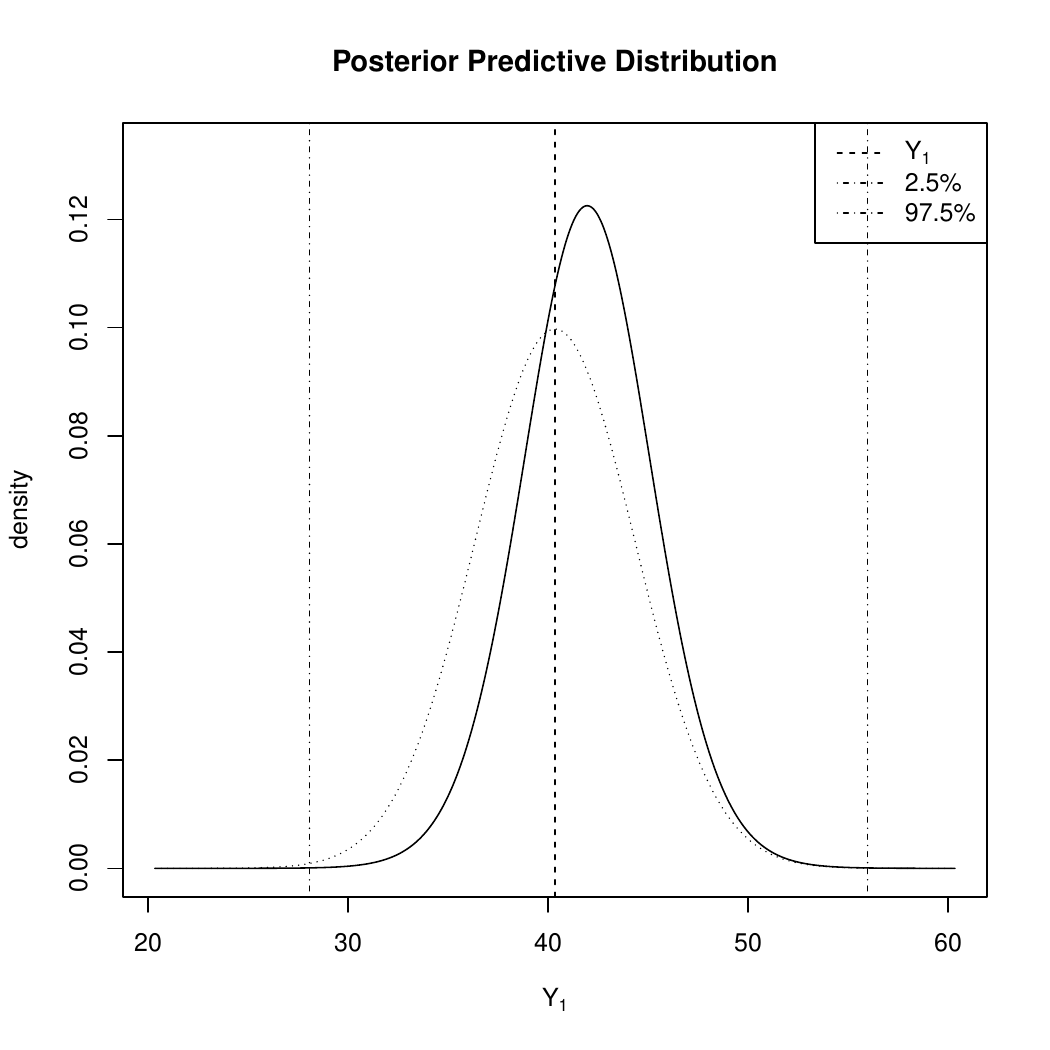}
\end{minipage}
\centering
\begin{minipage}[b]{0.4\linewidth}
\centering
\includegraphics[width=\textwidth]{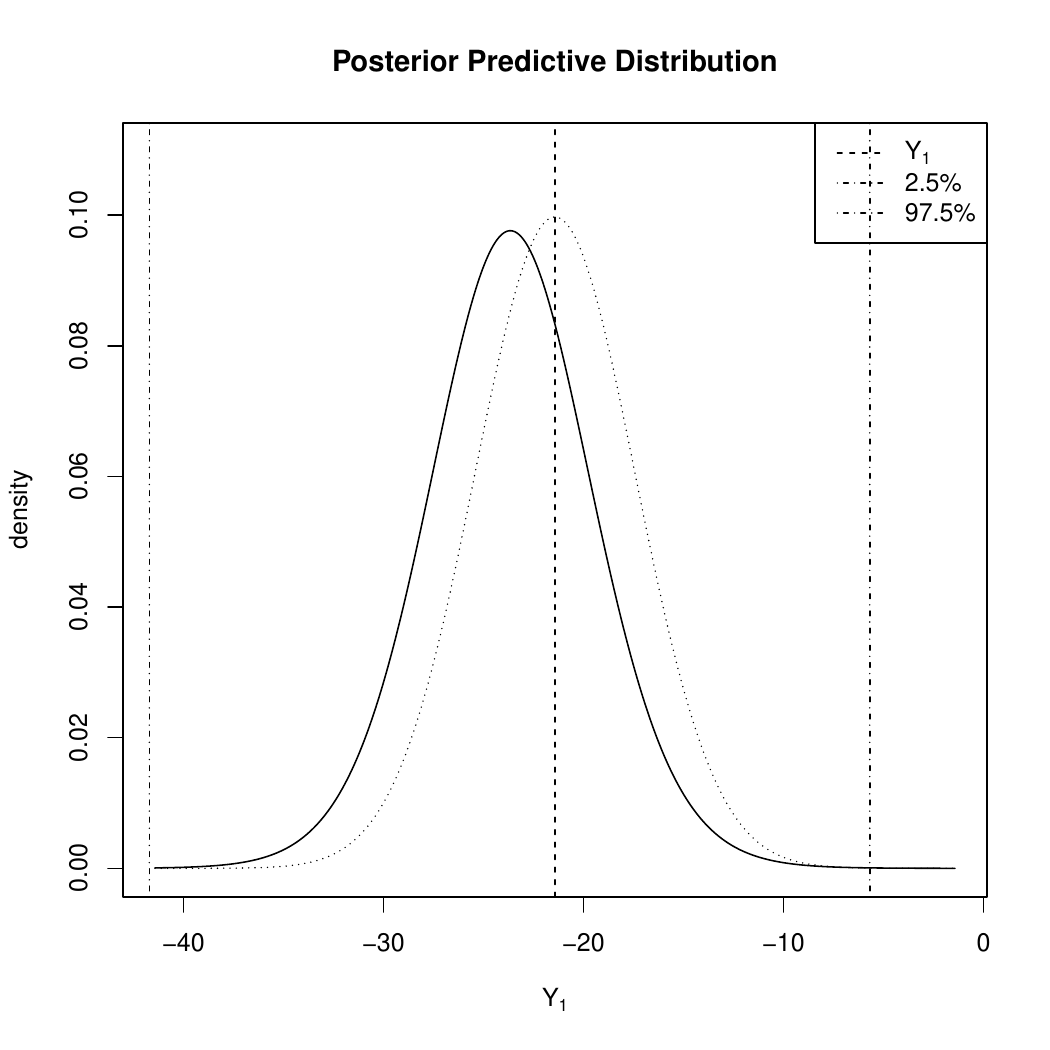}
\end{minipage}
\begin{minipage}[b]{0.4\linewidth}
\centering
\includegraphics[width=\textwidth]{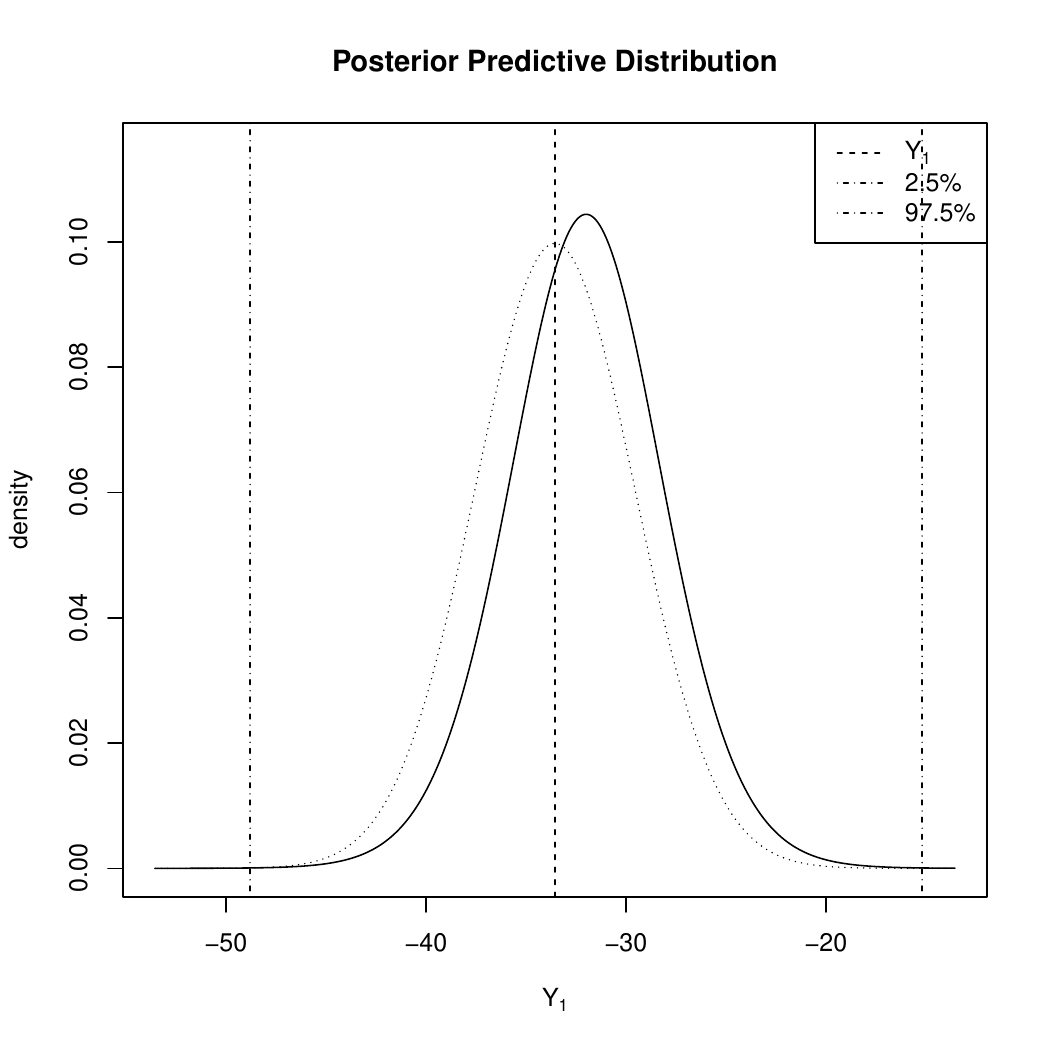}
\end{minipage}
\begin{minipage}[b]{0.4\linewidth}
\centering
\includegraphics[width=\textwidth]{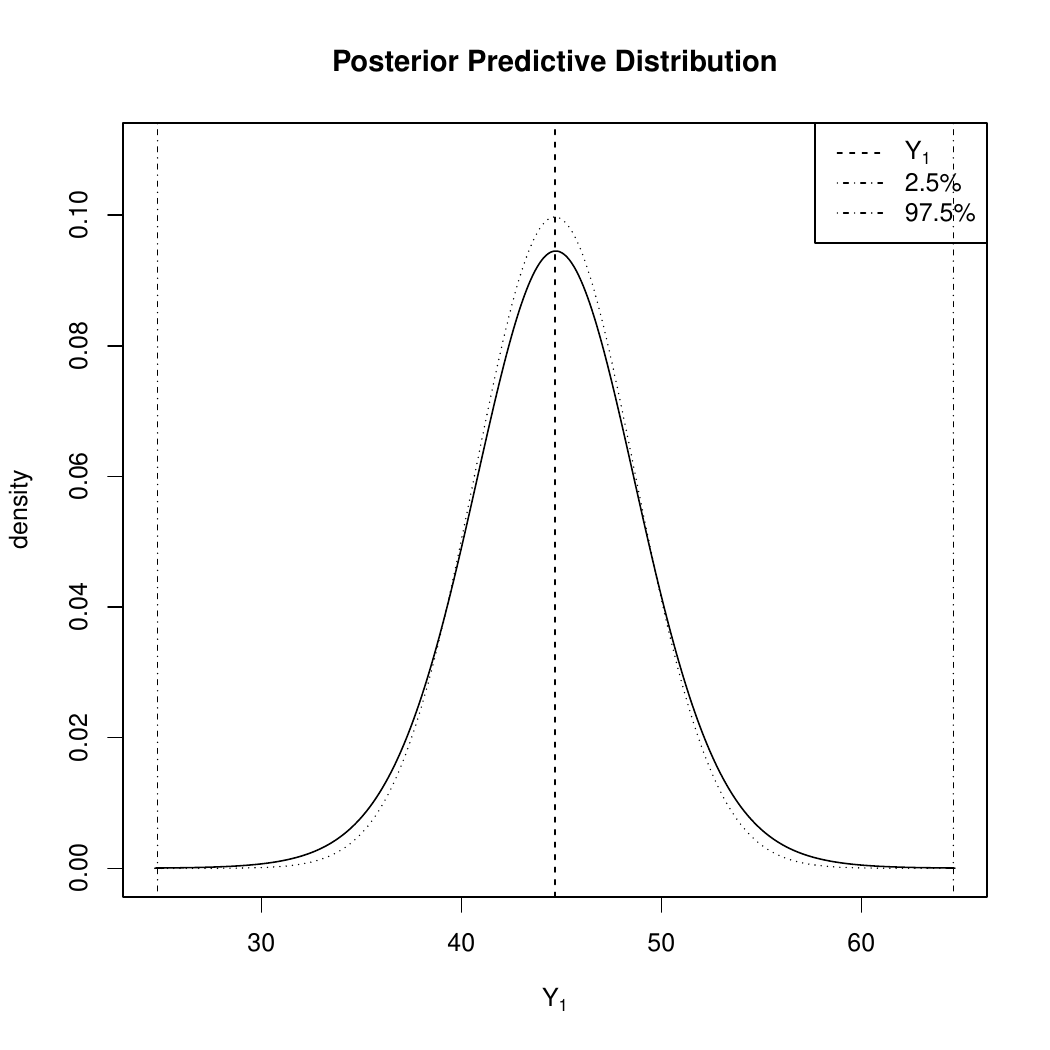}
\end{minipage}
\end{figure}
\subsection{Variable Selection}
Based on the posterior consistency of our method, we can also consider variable selection. An empirical posterior credible interval is naturally available for each coefficient. We set a coefficient to 0 when the corresponding 95\% posterior credible interval contains 0 in all the four scenarios and compare the results with those from lasso, spike-and-slab lasso, and horseshoe model. Table~\ref{tab:variable_selection} gives the average size of fitted models and frequency of catching the true model in the 500 replications. Perfect selection gives the number of nonzero coefficients in each model, including the intercept. The bridge prior model significantly  outperforms the others when the true model has half of the coefficients being small or is super sparse, while having a harder time in the dense scenario with small coefficients. Even though not reported here, the bridge prior model has the smallest false discovery rate among all the methods considered.  
\begin{table}[htbp]\small
   \centering

      \caption{Average number of selected variables and frequency finding the true model.}
     \begin{tabular}{lllllllll}
      \toprule
  
   &  \multicolumn{8}{c}{Scenario} \\
 \cmidrule(lr){2-9}
 Method & \multicolumn{2}{c}{I} &\multicolumn{2}{c}{II}&\multicolumn{2}{c}{III}&\multicolumn{2}{c}{IV} \\
       \cmidrule(lr){2-3}\cmidrule(lr){4-5}\cmidrule(lr){6-7}\cmidrule(lr){8-9}
      Perfect Selection & \multicolumn{2}{c}{3} &\multicolumn{2}{c}{11}&\multicolumn{2}{c}{21}&\multicolumn{2}{c}{9} \\
      B.P. & 3 & 500 & 11.030 & 443  & 18.446 & 23  & 9.012 & 493\\ 
Lasso  & 4.280 & 124 & 13.914 & 12 & 20.904 & 452& 20.316 & 0\\ 
SSLasso & 3.762 & 295 & 11.750 & 244  & 20.506 & 298 & 21.322 & 39   \\ 
Horseshoe & 7.138 & 7  & 13.740 & 20 & 20.382 & 261 & 10.354 & 165  \\     
  
  \bottomrule
   \end{tabular}
   \label{tab:variable_selection}
\end{table}
\section{Diabetes Data}
\label{sec:diabetes}
\begin{figure}[ht!]
\begin{minipage}[b]{0.32\linewidth}
\centering
\includegraphics[width=\textwidth]{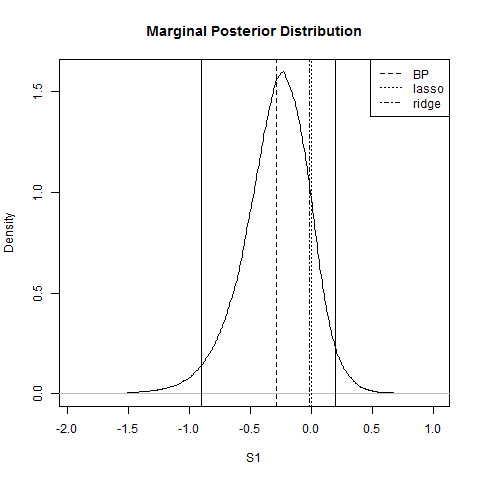}
\end{minipage}
\begin{minipage}[b]{0.32\linewidth}
\centering
\includegraphics[width=\textwidth]{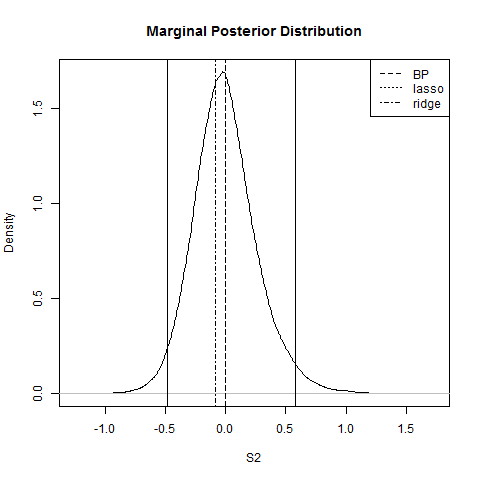}
\end{minipage}
\begin{minipage}[b]{0.32\linewidth}
\centering
\includegraphics[width=\textwidth]{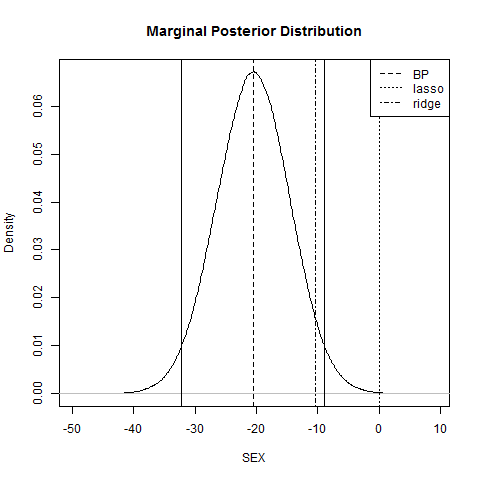}
\end{minipage}
\caption{Marginal posterior distributions for three selected predictors. The dashed line is the marginal posterior mean of each coefficient, dotted line the lasso estimates, and dash-dot line the ridge estimates. An empirical 95\% credible region is marked by solid line for each coefficient as well. }
\label{fig:real}
\end{figure}
The \texttt{Diabetes} data set \citep{efro:etal:2004} contains 10 predictors, 1 response, and 442 observations.  There is a positive correlation between predictor \texttt{S1} and \texttt{S2}. When the correlation between two predictors is close to 1 and both of them tend to be unimportant in the model, some methods may fit a negative coefficient for one predictor and a positive one for the other. For example, a OLS linear model with the \texttt{diabetes} data will fit a coefficient $-1.09$ for \texttt{S1} and $0.75$ for \texttt{S2}.

We applied the fully Bayesian hierarchical model to this data.  The hyperparameters were taken to be $e_1 = f_1 = 1$, $e_2 = 40$, $f_2 = 0.5$, and $e_3=f_3=0.001$ and we used the same uniform prior for $\alpha$.  We used 1e7 MCMC samples.  The left and middle panel of Figure \ref{fig:real} show the empirical marginal posterior distributions for \texttt{S1} and \texttt{S2} respectively. Marked by solid lines, both 95\% credible intervals suggest that \texttt{S1} and \texttt{S2} should not be included in the model. The lasso and ridge solutions are also presented in the figures. An interesting observation is that the effect of gender is significant based the bridge prior model while insignificant on the other two.

\section{A Multivariate Generalization}
\label{sec:multivariate}
There are many possible extensions of the proposed model.  For example, it is natural to consider multivariate versions.  We will briefly consider one of these; the others are somewhat outside of the scope of the current paper and hence are deferred to future work.

Suppose there are $n$ observations and that each observation consists of $m$ responses so that $Y_i$ is an $m$-vector.  Let $X_i$ be an $m \times p$ matrix of covariates.  We assume for $i=1, \ldots, n$
$$
Y_i | \beta, \Sigma \stackrel{ind}{\sim} \text{N}_{m}(X_i \beta, \Sigma) \; .
$$
We then assign priors
$$
\nu(\beta \vert \lambda,\alpha) = \left(\dfrac{\alpha}{2^{1/\alpha+1}\Gamma(1/\alpha)}\right)^{mp}
\left(\prod_{j=1}^{mp}\lambda_j\right)^{1/\alpha}
\exp\left\{-\frac{1}{2}\sum_{j=1}^{mp}\lambda_j|\beta_j|^\alpha\right\}\, ,
$$
$\Sigma \sim \text{Inv Wishart}(\Psi, v)$, $\alpha \sim \text{Uniform}(0.5, 4)$, 
$$
\nu(\lambda_j|\kappa_j, e_1, f_1, e_2, f_2)  = (1-\kappa_j) \text{Gamma}(\lambda_j; e_1, f_1) + \kappa_j \text{Gamma}(\lambda_j; e_2, f_2) \, ,
$$
and $\kappa_j \sim \text{Bernoulli}(.5)$.  A component-wise MCMC algorithm for the resulting posterior is given in Appendix~\ref{app:multivariate}.

We illustrate the performance of the procedure in a simple example.  Suppose $n=100$, $p=21$, and $m=10$.  In each column of $\beta$ we randomly chose 18 of the 20 predictors to be 0, while the remaining two are independently sampled from a $\text{N}(15,3^2)$. The covariance matrix $\Sigma$ is generated from $\Sigma = LL^T$, where $L$ is a sparse lower triangular matrix with 95\% of its elements being 0. 

We implemented the MCMC algorithm to estimate the posterior means $E[\beta|y]$ and $E[\Sigma|y]$. As shown in Figure~\ref{fig:diff_beta}, the difference between true and estimated coefficients matrix $\beta$ are in $[0,0.5]$. Estimation in covariance matrix and precision matrix, as expected, have more small but non-zero cells compared to the truth. Results are shown in Figure~\ref{fig:cov} . This example, in general shows that our procedure is effective at estimating the the true regression coefficients as well as the covariance and precision matrices.  The results here were typical of our other simulations which are not reported here.

\begin{figure}[ht!]
 \centering
  \includegraphics[width=0.7\textwidth]{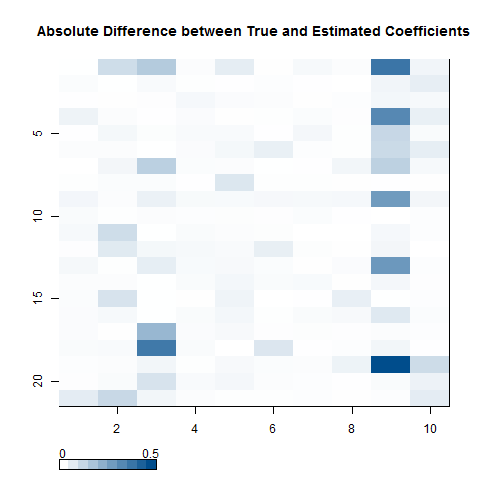}
\caption{Heatmap of $|\beta-\hat{\beta}|$}
 \label{fig:diff_beta}
\end{figure}

\begin{figure}[htbp]
\begin{minipage}[b]{0.45\linewidth}
\centering
\includegraphics[width=\textwidth]{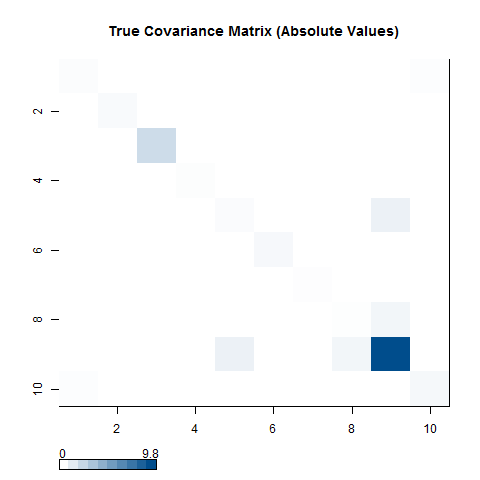}
\end{minipage}
\centering
\begin{minipage}[b]{0.45\linewidth}
\centering
\includegraphics[width=\textwidth]{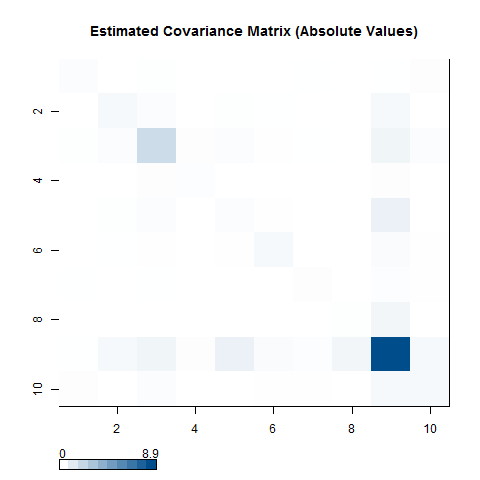}
\end{minipage}
\begin{minipage}[b]{0.45\linewidth}
\centering
\includegraphics[width=\textwidth]{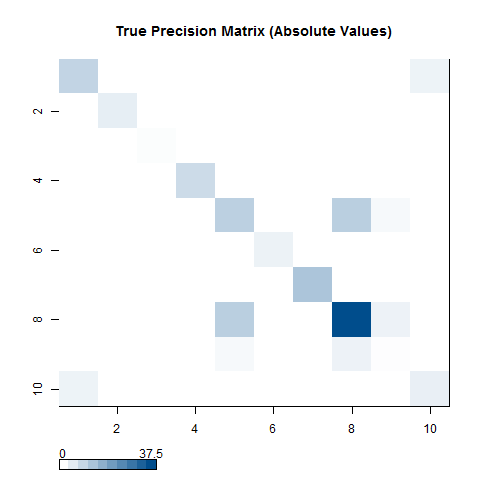}
\end{minipage}
\begin{minipage}[b]{0.45\linewidth}
\centering
\includegraphics[width=\textwidth]{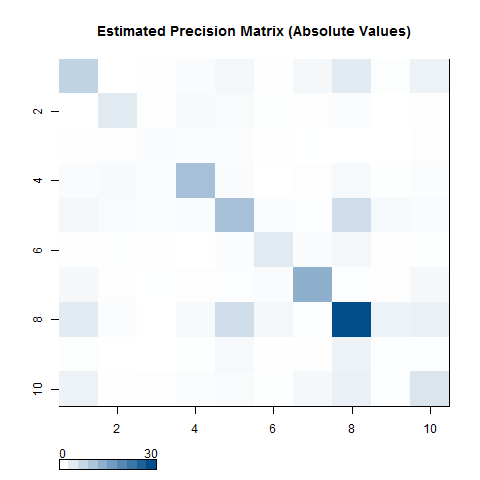}
\end{minipage}
\caption{Covariance matrix and precision matrix}
\label{fig:cov}
\end{figure}

\FloatBarrier

\section{Final Remarks}
\label{sec:remarks}
We proposed a fully Bayesian method for penalized linear regression which incorporates shrinkage-related parameters both at the individual and the group level namely, the $\lambda_i$'s and $\alpha$.  This allows the practitioner to address the uncertainty in the tuning parameters in a principled manner by averaging over the posterior distribution.  Overall, the method has cutting-edge performance in terms of prediction, estimation, and variable selection.

There are several potential directions for future research in this vein.  We considered one possible multivariate generalization and showed that it was effective at estimating covariance parameters.   Other multivariate approaches are certainly possible.  It would also be interesting to consider a broader class of univariate penalized regression approaches which would allow the embedding of lasso, ridge and bridge regression in a framework which also incorporates other penalized regressions such as the elastic net.

\begin{appendix}
\section{Proof of Theorem \ref{thm:shrinkage}}
\label{app:proof:shrinkage}
\begin{proof}
Notice that
\begin{align*}
m(y)=&\frac{1}{k_2-k_1}\int\int \frac{1}{\sqrt{2\pi}}\exp\left(-\frac{(\beta-y)^2}{2}\right)\dfrac{\alpha}{2^{1/\alpha+1}\Gamma(1/\alpha)}\\
&\left\{\frac{1}{2}\frac{f_1^{e_1}}{\Gamma(e_1)}\frac{\Gamma(e_1+1/\alpha)}{\left(|\beta|^\alpha/2+f_1\right)^{e_1+1/\alpha}}+\frac{1}{2}\frac{f_2^{e_2}}{\Gamma(e_2)}\frac{\Gamma(e_2+1/\alpha)}{\left(|\beta|^\alpha/2+f_2\right)^{e_2+1/\alpha}} \right\}d\beta d\alpha
\end{align*}
For ease of computation we assume that $e_1=e_2=e$, $f_1=f_2=f$ and therefore
\begin{align*}
m(y)=&\frac{1}{k_2-k_1}\int\int \frac{1}{\sqrt{2\pi}}\exp\left(-\frac{(\beta-y)^2}{2}\right)\dfrac{\alpha}{2^{1/\alpha+1}\Gamma(1/\alpha)}\\
&\frac{f^{e}}{\Gamma(e)}\frac{\Gamma(e+1/\alpha)}{\left(|\beta|^\alpha/2+f\right)^{e+1/\alpha}}d\beta d\alpha
\end{align*}
and
\begin{align*}
m'(y)=&\frac{1}{k_2-k_1}\int\int \frac{1}{\sqrt{2\pi}}(\beta-y)\exp\left(-\frac{(\beta-y)^2}{2}\right)\dfrac{\alpha}{2^{1/\alpha+1}\Gamma(1/\alpha)}\\
&\frac{f^{e}}{\Gamma(e)}\frac{\Gamma(e+1/\alpha)}{\left(|\beta|^\alpha/2+f\right)^{e+1/\alpha}}d\beta d\alpha \, .
\end{align*}
Set 
\begin{align*}
g(y) =& \int \exp\left(-\frac{(\beta-y)^2}{2}\right)\frac{1}{\left(|\beta|^\alpha/2+f\right)^{e+1/\alpha}}d\beta \\
h(y) =& \int (\beta-y)\exp\left(-\frac{(\beta-y)^2}{2}\right)\frac{1}{\left(|\beta|^\alpha/2+f\right)^{e+1/\alpha}}d\beta \, .
\end{align*}
Let $t=\beta-y$ so that
\begin{align*}
g(y) =& \int \exp\left(-\frac{t^2}{2}\right)\frac{1}{\left(|t+y|^\alpha/2+f\right)^{e+1/\alpha}}dt \\
h(y) =& \int t\exp\left(-\frac{t^2}{2}\right)\frac{1}{\left(|t+y|^\alpha/2+f\right)^{e+1/\alpha}}dt 
\end{align*}
while if $t = -t$ we then have 
\begin{align*}
g(y) =& \int \exp\left(-\frac{t^2}{2}\right)\frac{1}{\left(|y-t|^\alpha/2+f\right)^{e+1/\alpha}}dt \\
h(y) =& -\int t\exp\left(-\frac{t^2}{2}\right)\frac{1}{\left(|y-t|^\alpha/2+f\right)^{e+1/\alpha}}dt \, .
\end{align*}
Hence 
\begin{align*}
2g(y) =& \int \exp\left(-\frac{t^2}{2}\right)\left[\left(|y+t|^\alpha/2+f\right)^{-e-1/\alpha}+\left(|t-y|^\alpha/2+f\right)^{-e-1/\alpha}\right]dt \\
2h(y) =& \int t\exp\left(-\frac{t^2}{2}\right)\left[\left(|y+t|^\alpha/2+f\right)^{-e-1/\alpha}-\left(|t-y|^\alpha/2+f\right)^{-e-1/\alpha}\right]dt 
\end{align*}
Notice that both functions under the integral sign are even.  Thus
\begin{align*}
g(y) =& \int_0^\infty \exp\left(-\frac{t^2}{2}\right)\left[\left(|y+t|^\alpha/2+f\right)^{-e-1/\alpha}+\left(|t-y|^\alpha/2+f\right)^{-e-1/\alpha}\right]dt \\
h(y) =& \int_0^\infty t\exp\left(-\frac{t^2}{2}\right)\left[\left(|y+t|^\alpha/2+f\right)^{-e-1/\alpha}-\left(|t-y|^\alpha/2+f\right)^{-e-1/\alpha}\right] dt \, .
\end{align*}
Suppose $y > 0$ (a nearly identical proof will hold with $y < 0$).  Then 
\begin{align}
g(y)>& \int_0^\infty \exp\left(-\frac{t^2}{2}\right)\left(|t-y|^\alpha/2+f\right)^{-e-1/\alpha}dt \nonumber\\
>& \int_0^{y/2} \exp\left(-\frac{t^2}{2}\right)\left(|t-y|^\alpha/2+f\right)^{-e-1/\alpha}dt\nonumber\\
=& \int_0^{y/2} \exp\left(-\frac{t^2}{2}\right)\left((y-t)^\alpha/2+f\right)^{-e-1/\alpha}dt\nonumber\\
>& \int_0^{y/2} \exp\left(-\frac{t^2}{2}\right)\left(y^\alpha/2+f\right)^{-e-1/\alpha}dt\nonumber\\
=&C_1\cdot\left(y^\alpha/2+f\right)^{-e-1/\alpha},\label{fun_g}
\end{align}
where $C_1 < \frac{\sqrt{2\pi}}{2}$. Next consider 
\begin{align*}
-h(y) =& \int_0^\infty t\exp\left(-\frac{t^2}{2}\right)\left[\left(|t-y|^\alpha/2+f\right)^{-e-1/\alpha}-\left(|y+t|^\alpha/2+f\right)^{-e-1/\alpha}\right]dt\\
=& h_1(y) + h_2(y),
\end{align*}
where 
\begin{align*}
h_1(y) =& \int_0^{y/2} t\exp\left(-\frac{t^2}{2}\right)\left[\left((y-t)^\alpha/2+f\right)^{-e-1/\alpha}-\left((y+t)^\alpha/2+f\right)^{-e-1/\alpha}\right]dt\\
h_2(y)=& \int_{y/2}^\infty t\exp\left(-\frac{t^2}{2}\right)\left[\left(|t-y|^\alpha/2+f\right)^{-e-1/\alpha}-\left(|y+t|^\alpha/2+f\right)^{-e-1/\alpha}\right]dt \, .
\end{align*}
Set
$$S(t) = \left((y-t)^\alpha/2+f\right)^{-e-1/\alpha}-\left((y+t)^\alpha/2+f\right)^{-e-1/\alpha}$$
We want to show that, when $0<t<y/2$,
$$S(t) < V(t) = 4\alpha (e+1/\alpha) \left((y/2)^\alpha/2+f\right)^{-e-1/\alpha} \frac{t}{y} \, . $$
First notice that $S(0) = V(0) = 0$ so that we only need to show that $S'(t) < V'(t)$. Consider
\begin{align*}
S'(t) =& \frac{\alpha (e+1/\alpha)}{2}(y-t)^{\alpha-1}\left(\frac{(y-t)^\alpha}{2}+f\right)^{-(e+1/\alpha+1)}\\
&+ \frac{\alpha (e+1/\alpha)}{2}(y+t)^{\alpha-1}\left(\frac{(y+t)^\alpha}{2}+f\right)^{-(e+1/\alpha+1)}\\
V'(t) =& \left((y/2)^\alpha/2+f\right)^{-e-1/\alpha}\cdot \frac{1}{y}\cdot 4\alpha (e+1/\alpha)
\end{align*}
Notice that  
\begin{align*}
f(x) = x^{\alpha-1}\left(\frac{x^\alpha}{2}+f\right)^{-(e+1/\alpha+1)} = x^{-(2+\alpha e)} \left( \frac{1}{2} + \frac{f}{x^{\alpha}}\right)^{-(e + 1 + 1/\alpha)}
\end{align*}
is a decreasing function when $x$ is large and $0<k_1<\alpha<k_2\leq4$. Thus, when $y$ is large and $0<t<y/2$, we have $y-t>y/2$ and $y+t>y>y/2$.  Therefore,
\begin{align*}
yS'(t)=& \frac{\alpha (e+1/\alpha)}{2}y(y-t)^{\alpha-1}\left(\frac{(y-t)^\alpha}{2}+f\right)^{-(e+1/\alpha+1)}\\
&+ \frac{\alpha (e+1/\alpha)}{2}y(y+t)^{\alpha-1}\left(\frac{(y+t)^\alpha}{2}+f\right)^{-(e+1/\alpha+1)}\\
<& \alpha (e+1/\alpha)y(y/2)^{\alpha-1}\left(\frac{(y/2)^\alpha}{2}+f\right)^{-(e+1/\alpha+1)}\\
< &\alpha (e+1/\alpha)y(y/2)^{\alpha-1}\left(\frac{(y/2)^\alpha}{2}\right)^{-1}\left(\frac{(y/2)^\alpha}{2}+f\right)^{-(e+1/\alpha)}\\
= & 4\alpha (e+1/\alpha)\left((y/2)^\alpha/2+f\right)^{-e-1/\alpha}.
\end{align*}
which gives us $S'(t)<V'(t)$. We now have $S(t)<V(t)$. Then 
\begin{align*}
h_1(y)=& \int_0^{y/2}  t\exp\left(-\frac{t^2}{2}\right) S(t) dt\\<& \int_0^{y/2} t^2\exp\left(-\frac{t^2}{2}\right)\left((y/2)^\alpha/2+f\right)^{-e-1/\alpha}\cdot \frac{1}{y}\cdot 4\alpha (e+1/\alpha)dt\\
<&\left((y/2)^\alpha/2+f\right)^{-e-1/\alpha}\cdot \frac{1}{y}\cdot 4\alpha (e+1/\alpha)\int_{-\infty}^{\infty} t^2\exp\left(-\frac{t^2}{2}\right)dt\\
<&\left((y/2)^\alpha/2+f\right)^{-e-1/\alpha}\cdot \frac{1}{y}\cdot 4\alpha (e+1/\alpha)\cdot 2\sqrt{2\pi}
\end{align*}
and
\begin{align*}
h_2(y)=& \int_{y/2}^\infty t\exp\left(-\frac{t^2}{2}\right)\left[\left(|t-y|^\alpha/2+f\right)^{-e-1/\alpha}-\left(|y+t|^\alpha/2+f\right)^{-e-1/\alpha}\right]dt\\
<& \int_{y/2}^\infty t\exp\left(-\frac{t^2}{2}\right)\left(|t-y|^\alpha/2+f\right)^{-e-1/\alpha}dt\\
<& f^{-e-1/\alpha}\int_{y/2}^\infty t\exp\left(-\frac{t^2}{2}\right)dt\\
=&f^{-e-1/\alpha}\exp\left(-\frac{y^2}{8}\right) \, .
\end{align*}
Therefore, 
\begin{align}
-h(y)=&h_1(y)+h_2(y)\nonumber\\
<&\left((y/2)^\alpha/2+f\right)^{-e-1/\alpha}\cdot \frac{1}{y}\cdot 4\alpha (e+1/\alpha) \cdot 2\sqrt{2\pi}+f^{-e-1/\alpha}\exp\left(-\frac{y^2}{8}\right) \label{fun_h}
\end{align}
By \eqref{fun_g} and \eqref{fun_h} we have
\begin{align*}
  0<&-\frac{m'(y)}{m(y)}=\frac{\int-h(y)\dfrac{\alpha\Gamma(e+1/\alpha)}{2^{1/\alpha}\Gamma(1/\alpha)}d\alpha}{   \int g(y)\dfrac{\alpha\Gamma(e+1/\alpha)}{2^{1/\alpha}\Gamma(1/\alpha)} d\alpha}\\
 <&\frac{1}{y}\frac{\int\dfrac{\alpha\Gamma(e+1/\alpha)}{2^{1/\alpha+1}\Gamma(1/\alpha)}\left[\left((y/2)^\alpha/2+f\right)^{-e-1/\alpha}8\alpha(e+1/\alpha)\right]d\alpha}{\int\dfrac{\alpha\Gamma(e+1/\alpha)}{2^{1/\alpha+1}\Gamma(1/\alpha)}C_1\left(y^\alpha/2+f\right)^{-e-1/\alpha}(2\pi)^{-1/2}d\alpha}\\
 &+\frac{\int\dfrac{\alpha\Gamma(e+1/\alpha)}{2^{1/\alpha+1}\Gamma(1/\alpha)}\left[f^{-e-1/\alpha}(2\pi)^{-1/2}\exp\left(-\frac{y^2}{8}\right)\right]d\alpha}{\int\dfrac{\alpha\Gamma(e+1/\alpha)}{2^{1/\alpha+1}\Gamma(1/\alpha)}C_1\left(y^\alpha/2+f\right)^{-e-1/\alpha}(2\pi)^{-1/2}d\alpha}\\
 \equiv& \frac{1}{y}A_1 + A_2
 \end{align*}
 Notice that $A_1$ is bounded by a constant since
 \begin{align*}
 A_1 =& \frac{\int\dfrac{\alpha\Gamma(e+1/\alpha)}{2^{1/\alpha+1}\Gamma(1/\alpha)}\left[\left((y/2)^\alpha/2+f\right)^{-e-1/\alpha}8\alpha(e+1/\alpha)\right]d\alpha}{\int\dfrac{\alpha\Gamma(e+1/\alpha)}{2^{1/\alpha+1}\Gamma(1/\alpha)}C_1\left(y^\alpha/2+f\right)^{-e-1/\alpha}(2\pi)^{-1/2}d\alpha}\\
 \leq & \frac{\int\dfrac{k_2\Gamma(e+1/k_1)}{2^{1/k_2+1}\Gamma(1/k_2)}\left[\left((y/2)^\alpha/2+f\right)^{-e-1/\alpha}8k_2(e+1/k_1)\right]d\alpha}{\int\dfrac{k_1\Gamma(e+1/k_2)}{2^{1/k_1+1}\Gamma(1/k_1)}C_1\left(y^\alpha/2+f\right)^{-e-1/\alpha}(2\pi)^{-1/2}d\alpha}\\
 \equiv& C_{e,f,k_1,k_2}^{(0)}\frac{\int\left((y/2)^\alpha/2+f\right)^{-e-1/\alpha}d\alpha}{\int\left(y^\alpha/2+f\right)^{-e-1/\alpha}d\alpha}\\
 < &C_{e,f,k_1,k_2}^{(0)}\frac{\int\left(y^\alpha/2+f\right)^{-e-1/\alpha}d\alpha}{\int\left(y^\alpha/2+f\right)^{-e-1/\alpha}d\alpha}=C_{e,f,k_1,k_2}^{(0)}\\
 \end{align*}
Notice that, because of the term $\exp(-\frac{y^2}{8})$, $A_2$ is a higher order term of $A_1$ when $y$ goes to infinity. Then we can write
 \begin{align*}
\frac{1}{y}A_1+A_2=&\frac{1}{y}C^{(1)}_{e,f,k_1,k_2} + o(\frac{1}{y})C^{(2)}_{e,f,k_1,k_2}
\sim O(1/y), 
\end{align*}
since $C^{(1)}_{e,f,k_1,k_2}<C_{e,f,k_1,k_2}^{(0)}$ and $C^{(2)}_{e,f,k_1,k_2}$ are constants depending on the choice of $k_1,k_2,e,f$. Therefore 
$\lim_{y\rightarrow \infty}\frac{m'(y)}{m(y)} =0$. 
\end{proof}

\section{Proof of Theorem \ref{thm:conditional} and \ref{thm:marginal}}
\label{app:proofs}
We begin with a preliminary result that will be used in both proofs.

\begin{lemma}
\label{lem:ineq}
Let $A_n$ denote the subset of nonzero entries in $\beta_n^0$ and $0<\Delta\leq \varepsilon^2\Lambda^2_{\text{min}}/(48\Lambda^2_{\text{max}})$ and $\rho >0$.  Then
\begin{align}
\label{eq:ineq}
&\nu_n\left\{\beta_n:||\beta_n-\beta_n^0||<\Delta/n^{\rho/2}\,|\,\theta_n\right\} \geq \nonumber\\
&\left\{ \frac{\Delta}{\sqrt{p_n}n^{\rho/2}} \frac{\alpha_n(\gamma\lambda_{nj})^{1/\alpha_n}}{2^{1/\alpha_n} \Gamma(1/\alpha_n)} \exp \left(-\frac{\gamma\lambda_{nj}}{2}\left(\left(\frac{\Delta}{\sqrt{p_n}n^{\rho/2}}\right)^{\alpha_n} + \sup_{j\in A_n}|\beta_{nj}^0|^{\alpha_n}\right) \right)\right\}^{q_n} \nonumber\\
&\times \left(1 - \frac{p_nn^\rho \Gamma(3/\alpha_n)(\gamma\lambda_{nj})^{-2/\alpha_n}4^{1/\alpha_n}}{\Gamma(1/\alpha_n)\Delta^2} \right)\, .
\end{align}
\end{lemma}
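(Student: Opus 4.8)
The plan is to exploit the conditional independence of the coordinates of $\beta_n$ given $\theta_n$. Under the prior \eqref{eq:beta prior}, conditional on $(\gamma,\alpha_n,\lambda_n)$ the components $\beta_{nj}$ are independent, each with the generalized-Gaussian density $p_j(b)\propto (\gamma\lambda_{nj})^{1/\alpha_n}\exp\{-\tfrac{\gamma\lambda_{nj}}{2}|b|^{\alpha_n}\}$ normalized by $\alpha_n/(2^{1/\alpha_n+1}\Gamma(1/\alpha_n))$. Writing $\delta := \Delta/(\sqrt{p_n}\,n^{\rho/2})$, I would first replace the Euclidean ball by a more tractable product event: if $|\beta_{nj}-\beta_{nj}^0|<\delta$ for every $j\in A_n$ and $\|\beta_{n,A_n^c}\|<\Delta/n^{\rho/2}$, where $A_n^c$ indexes the zero coordinates, then $\|\beta_n-\beta_n^0\|<\Delta/n^{\rho/2}$ up to the bookkeeping of the radius budget. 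By independence the prior probability then factors across $A_n$ and $A_n^c$, which produces the product form of the bound.

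For the nonzero block I would lower bound each one-dimensional integral $\int_{\beta_{nj}^0-\delta}^{\beta_{nj}^0+\delta}p_j(b)\,db$ by the interval length $2\delta$ times the minimum of $p_j$ over it. Since $|b|\le|\beta_{nj}^0|+\delta$ on this interval, the minimum is attained where $|b|^{\alpha_n}$ is largest, so the exponential factor is at least $\exp\{-\tfrac{\gamma\lambda_{nj}}{2}(|\beta_{nj}^0|+\delta)^{\alpha_n}\}$. Bounding $(|\beta_{nj}^0|+\delta)^{\alpha_n}\le \delta^{\alpha_n}+\sup_{j\in A_n}|\beta_{nj}^0|^{\alpha_n}$ and absorbing the factor $2$ into $2^{1/\alpha_n+1}$ to leave $2^{1/\alpha_n}$ gives exactly the quantity inside the braces; since this per-coordinate bound is uniform over $j\in A_n$ (it uses the supremum), the product over the $q_n=|A_n|$ nonzero coordinates is at least the braces raised to the power $q_n$.

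For the zero block I would use the second moment. The routine variance computation recorded after \eqref{eq:beta prior} gives $\mathrm{Var}(\beta_{nj}\mid\gamma,\lambda_{nj},\alpha_n)=\tfrac{\Gamma(3/\alpha_n)}{\Gamma(1/\alpha_n)}(\gamma\lambda_{nj})^{-2/\alpha_n}4^{1/\alpha_n}$, and since $\beta_{nj}^0=0$ here the mean is zero. A Chebyshev, equivalently an $\ell_2$ Markov, inequality applied to the zero coordinates with radius tied to $\delta^2=\Delta^2/(p_n n^\rho)$ then yields the final factor $1-p_n n^\rho\Gamma(3/\alpha_n)(\gamma\lambda_{nj})^{-2/\alpha_n}4^{1/\alpha_n}/(\Gamma(1/\alpha_n)\Delta^2)$, where I would relax the count $p_n-m_n$ of zero coordinates up to $p_n$ since this only weakens the already lower bound.

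The main obstacle is the exponent inequality $(|\beta_{nj}^0|+\delta)^{\alpha_n}\le\delta^{\alpha_n}+|\beta_{nj}^0|^{\alpha_n}$, i.e. subadditivity of $x\mapsto x^{\alpha_n}$, which is immediate for $\alpha_n\le 1$ but fails for $\alpha_n>1$; here one must either invoke the smallness of $\delta$, so that the discrepancy is negligible in the asymptotic regime in which the lemma is applied, or replace it with a convexity-based estimate at the cost of constants. The remaining care is bookkeeping: allocating the radius $\Delta/n^{\rho/2}$ between the box over $A_n$ and the $\ell_2$-ball over $A_n^c$ so that the two sufficient conditions combine to the stated ball, which fixes the constants appearing in $\delta$ and in the Chebyshev factor.
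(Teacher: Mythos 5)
Your proposal is correct and takes essentially the same route as the paper's proof: the same product-event decomposition (coordinate boxes of half-width $\Delta/(\sqrt{p_n}n^{\rho/2})$ over $A_n$ together with an $\ell_2$ ball over the zero coordinates), the same conditional-independence factorization, the same interval-length-times-minimum-density lower bound for the nonzero block, and the same Markov/Chebyshev bound via the prior variance $\frac{\Gamma(3/\alpha_n)}{\Gamma(1/\alpha_n)}(\gamma\lambda_{nj})^{-2/\alpha_n}4^{1/\alpha_n}$ for the zero block. The one remark worth making is that the obstacle you flag---subadditivity $(x+y)^{\alpha_n}\le x^{\alpha_n}+y^{\alpha_n}$ failing for $\alpha_n>1$---is not addressed in the paper either, whose proof applies $|\beta_{nj}|^{\alpha_n}\le|\beta_{nj}-\beta_{nj}^0|^{\alpha_n}+|\beta_{nj}^0|^{\alpha_n}$ without comment even though the prior permits $\alpha_n$ up to $k_2\ge 2$; your suggested repair (a convexity-type bound such as $(x+y)^{\alpha_n}\le 2^{\alpha_n-1}(x^{\alpha_n}+y^{\alpha_n})$, which only alters constants and leaves the $o(n)$ asymptotics in Theorems~\ref{thm:conditional} and~\ref{thm:marginal} intact) is exactly what is needed to make the argument rigorous.
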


\begin{proof}
Notice that
\begin{align*}
&\cap_{j\in A_n} \left\{\beta_{nj}: |\beta_{nj}-\beta_{nj}^0|<\frac{\Delta}{\sqrt{p_n}n^{\rho/2}} \right\} \cap \left\{\beta_{n,j\not\in A_n}: \sum_{j\not\in A_n} \beta_{nj}^2 < \frac{(p_n-q_n)\Delta^2}{p_nn^\rho} \right\} \\
&\subseteq \left\{\beta_n : \sum_{j\in A_n} (\beta_{nj}-\beta_{nj}^0)^2+\sum_{j\not\in A_n} \beta_{nj}^2 < \frac{\Delta^2}{n^\rho} \right\} \, .
\end{align*}
Then
\begin{align*}
& \nu_n\left\{\beta_n:||\beta_n-\beta_n^0||<\Delta/n^{\rho/2}\, |\, \theta_n \right\} = \nu_n\left\{\beta_n : \sum_{j\in A_n}(\beta_{nj}-\beta_{nj}^0)^2+\sum_{j\not\in A_n} \beta_{nj}^2 < \Delta^2 / n^\rho\, |\, \theta_n \right\}\\
&\geq\nu_n\left\{\cap_{j\in A_n}\left(\beta_{nj}: |\beta_{nj}-\beta_{nj}^0| < \frac{\Delta}{\sqrt{p_n}n^{\rho/2}}\right)\cap\left(\beta_{n,j\not\in A_n}: \sum_{j\not\in A_n} \beta_{nj}^2 < \frac{(p_n-q_n)\Delta^2}{p_nn^\rho}\right)\,|\,\theta_n\right\} \; .
\end{align*}
and by conditional independence
\begin{align}
\label{Ccite1}
&\geq \prod_{j\in A_n} \nu_n\left\{\beta_{nj}: |\beta_{nj}-\beta_{nj}^0| < \frac{\Delta}{\sqrt{p_n}n^{\rho/2}}\,|\, \theta_n\right\} \nu_n\left\{\beta_{n,j\not\in A_n}: \sum_{j\not\in A_n} \beta_{nj}^2 < \frac{(p_n-q_n)\Delta^2}{p_nn^\rho}\,| \,\theta_n\right\}\nonumber\\
&=\prod_{j\in A_n} \nu_n\left\{ \beta_{nj}: |\beta_{nj}-\beta_{nj}^0| < \frac{\Delta}{\sqrt{p_n}n^{\rho/2}}\, | \, \theta_n\right\}\left[1-\nu_n\left\{ \beta_{n,j\not\in A_n} : \sum_{j\not\in A_n} \beta_{nj}^2 \geq \frac{(p_n-q_n)\Delta^2}{p_nn^\rho}\,|\,\theta_n\right\}\right]\nonumber\\
&\geq\prod_{j\in A_n}\nu_n\left\{ \beta_{nj}: |\beta_{nj}-\beta_{nj}^0|<\frac{\Delta}{\sqrt{p_n}n^{\rho/2}}\,|\,\theta_n\right\} \left[1 - \frac{p_nn^\rho E \left(\sum_{j\not\in A_n}\beta_{nj}^2\,|\,\theta_n\right)}{(p_n-q_n)\Delta^2}\right]\, .
\end{align}
For $j \in A_n$, we have 
\begin{align}
\label{the_other}
&\nu_n\left\{\beta_{nj}: |\beta_{nj}-\beta_{nj}^0|<\frac{\Delta}{\sqrt{p_n}n^{\rho/2}}\,| \, \theta_n\right\}\nonumber\\
=& \int_{|\beta_{nj}-\beta_{nj}^0|<\frac{\Delta}{\sqrt{p_n}n^{\rho/2}}} \frac{\alpha_n(\gamma\lambda_{nj})^{1/\alpha_n}}{2^{1/\alpha_n+1}\Gamma(1/\alpha_n)}\exp\left\{-\frac{\gamma\lambda_{nj}}{2}|\beta_{nj}|^{\alpha_n}\right\} \, d\beta_{nj} \nonumber\\
\geq& \int_{|\beta_{nj}-\beta_{nj}^0|<\frac{\Delta}{\sqrt{p_n}n^{\rho/2}}}\frac{\alpha_n(\gamma\lambda_{nj})^{1/\alpha_n}}{2^{1/\alpha_n+1}\Gamma(1/\alpha_n)}\exp\left\{-\frac{\gamma\lambda_{nj}}{2}\left(|\beta_{nj}-\beta_{nj}^0|^{\alpha_n}+ |\beta_{nj}^0|^{\alpha_n}\right)\right\} \, d\beta_{nj}\nonumber\\
\geq& \frac{\Delta}{\sqrt{p_n}n^{\rho/2}}\frac{\alpha_n(\gamma\lambda_{nj})^{1/\alpha_n}}{2^{1/\alpha_n}\Gamma(1/\alpha_n)}\exp\left\{-\frac{\gamma\lambda_{nj}}{2} \left(\left(\frac{\Delta}{\sqrt{p_n}n^{\rho/2}}\right)^{\alpha_n}+ \sup_{j\in A_n}|\beta_{nj}^0|^{\alpha_n}\right) \right\}\, .
\end{align}
Since
\begin{align*}
E(\beta_{nj}^2|\Theta_n)= \frac{\Gamma(3/\alpha_n)}{\Gamma(1/\alpha_n)}(\gamma\lambda_{nj})^{-2/\alpha_n}4^{1/\alpha_n}.
\end{align*}
combining \eqref{Ccite1} and \eqref{the_other} yields the claim.
\end{proof}

\begin{proof}[Proof of Theorem~\ref{thm:conditional}] 
All we need to do is show that there exists $N$ such that for $n \ge N$
\begin{align*} \nu_n\left\{\beta_n: ||\beta_n-\beta_n^0||<\frac{\Delta}{n^{\rho/2}}\,|\,\theta_n\right\} > \exp(-dn)\, .  \end{align*} 
The result would then follow directly from Theorem 1 in \cite{arma:etal:2013b}.

Consider \eqref{eq:ineq}. If $\gamma\lambda_{nj} = (C\sqrt{p_n}n^{\rho/2}\log n)^{\alpha_n}$ for finite $C> \sqrt{\frac{\Gamma(3/\alpha_n) 4^{1/\alpha_n}}{\Gamma(1/\alpha_n)\Delta^2\log^2n}}$, then 
\[ 1-\frac{p_nn^\rho\Gamma(3/\alpha_n)(\gamma \lambda_{nj})^{-2/\alpha_n}4^{1/\alpha_n}}{\Gamma(1/\alpha_n)\Delta^2} >0\, .  
\] 
Now take the negative logarithm of both sides of \eqref{eq:ineq} to obtain 
\begin{align*}
-\log\nu_n\left\{\beta_n : || \beta_n-\beta_n^0 || < \Delta/n^{\rho/2} \, | \, \theta_n\right\} \leq &-q_n\log\Delta - q_n\log\frac{C\alpha_n\log n}{2^{1/\alpha_n} \Gamma(1/\alpha_n)}\\                                                                                          &+q_nC^{\alpha_n}(\log n)^{\alpha_n} \Delta^{\alpha_n}/2\\
  &+ q_n \left(C\sqrt{p_n}n^{\rho/2}\log n \right)^{\alpha_n}\sup_{j \in \mathcal{A}_n}|\beta_{nj}|^{\alpha_n}/2\\                                                             
&-\log \left(1-\frac{4^{1/\alpha_n} \Gamma(3/\alpha_n)}{C^2\Gamma(1/\alpha_n)(\log n)^2 \Delta^2} \right) \, .  
\end{align*} 
By assumption $q_n=o\{n^{1-\rho}/(p_n\log^2 n)\}$ as $n \to \infty$ for $\rho\in (0,1)$ so that also $q_n =o(n)$.  Thus the first, second, and the fifth term on the right hand side are $o(n)$. Consider the third and fourth terms.  As $n \rightarrow \infty$, \begin{align*}
 0 &\leftarrow \frac{q_n}{n^{1-\rho}/(p_n\log^2 n)}\\
 & = \frac{q_n\left( C\sqrt{p_n}n^{\rho/2}\log n \right)^{\alpha_n}}{n^{1-\rho}/(p_n\log^2 n) \cdot \left( C\sqrt{p_n}n^{\rho/2}\log n \right)^{\alpha_n}}\\
&\geq \frac{q_n\left(C\sqrt{p_n}n^{\rho/2}\log n \right)^{\alpha_n}}{n^{1-\rho}/(p_n\log^2 n) \cdot \left(C\sqrt{p_n}n^{\rho/2}\log n \right)^2}\\
 &= \frac{q_n \left(C\sqrt{p_n}n^{\rho/2}\log n \right)^{\alpha_n}}{C^2n} \geq 0 \, .  \end{align*} 
Thus, as $n \to \infty$, $q_n(C\sqrt{p_n}n^{\rho/2}\log n )^{\alpha_n} = o(n)$ and $q_n(\log n)^{\alpha_n} = o(n)$.  Therefore the third and fourth terms are also $o(n)$. Then for all $d>0$, as $n \rightarrow \infty$, 
\begin{align*} -\log\nu_n\left\{\beta_n:||\beta_n-\beta_n^0||<\Delta/n^{\rho/2}\,| \, \theta_n \right\} < dn 
\end{align*} 
and the result follows.
\end{proof}

\begin{proof}[Proof of Theorem~\ref{thm:marginal}]
We need to show that there exists $N$ such that for $n \ge N$
\begin{align*} 
\nu_n\left\{\beta_n: ||\beta_n-\beta_n^0||< \Delta / n^{\rho/2} \right\} > \exp(-dn)\, .  
\end{align*} 
The result would then follow directly from Theorem 1 in \cite{arma:etal:2013b}.

Let $B_n =  \{\beta_n: ||\beta_n-\beta_n^0||< \Delta / n^{\rho/2} \}$. Then, since $\nu_n (d\theta_n)$ is a proper prior distribution,
\begin{align*}
\nu_n (B_n) = \int_{B_n} \int \nu_n(d\beta_n | \theta_n) \, \nu_n (d\theta_n)
 = \int \nu_n(B_n | \theta_n) \, \nu_n (d\theta_n)
 \ge \inf_{\theta_n} \nu_n (B_n | \theta_n) \; .
\end{align*}
We will show that
there exists $N$ such that for $n \ge N$
\begin{align*} 
\inf_{\theta_n} \nu_n(B_n \, | \, \theta_n ) > \exp(-dn)\, .  
\end{align*} 
From \eqref{eq:ineq}  we have 
\begin{align}
\label{Ecite2}
\nu_n (B_n \,|\,\theta_n) & \geq 
\left\{ \frac{\Delta}{\sqrt{p_n}n^{\rho/2}} \frac{\alpha_n(\gamma\lambda_{nj})^{1/\alpha_n}}{2^{1/\alpha_n} \Gamma(1/\alpha_n)} \exp \left(-\frac{\gamma\lambda_{nj}}{2}\left(\left(\frac{\Delta}{\sqrt{p_n}n^{\rho/2}}\right)^{\alpha_n} + \sup_{j\in A_n}|\beta_{nj}^0|^{\alpha_n}\right) \right)\right\}^{q_n} \nonumber\\
&\times \left(1 - \frac{p_nn^\rho \Gamma(3/\alpha_n)(\gamma\lambda_{nj})^{-2/\alpha_n}4^{1/\alpha_n}}{\Gamma(1/\alpha_n)\Delta^2} \right)\, .
\end{align}
Let $\gamma\lambda_{nj}=(C_n\sqrt{p_n}n^{\rho/2}\log n)^{\alpha_n}$ where
\[
C_n > \sqrt{\frac{\Gamma(3/\alpha_n)4^{1/\alpha_n}}{\Gamma(1/\alpha_n)\Delta^2\log^2n}} \rightarrow 0~~~~~~\text{as}~~n \to \infty\; .
\]
Then 
\[
1 -\frac{p_nn^\rho\Gamma(3/\alpha_n)(\gamma \lambda_{nj})^{-2/\alpha_n}4^{1/\alpha_n}}{\Gamma(1/\alpha_n) \Delta^2}  > 0 \; .
\]
Taking the negative logarithm of both sides of \eqref{Ecite2} we obtain 
\begin{align*}
\sup_{\theta_n}\left[ -\log \nu_n(B_n | \theta_n) \right] \leq &\sup_{\theta_n}\left[ -q_n\log\Delta - q_n\log\frac{C_n \alpha_n\log n}{2^{1/\alpha_n}\Gamma(1/\alpha_n)}\right. \\
&+q_nC_n^{\alpha_n}(\log n)^{\alpha_n} \Delta^{\alpha_n}/2\\
&+ q_n\left(C_n\sqrt{p_n}n^{\rho/2}\log n \right)^{\alpha_n}\sup_{j \in A_n}|\beta_{nj}|^{\alpha_n}/2\\
&\left. -\log \left(1-\frac{4^{1/\alpha_n}\Gamma(3/\alpha_n)}{C_n^2 \Gamma(1/\alpha_n)(\log n)^2\Delta^2} \right) \right] \, .
\end{align*}
Since $C^2 =o(n)$ by assumption, the rest of the proof is the same as the last part of the proof for Theorem~\ref{thm:conditional} and hence is omitted.
\end{proof}

\section{MCMC Algorithm for Section~\ref{sec:multivariate}}
\label{app:multivariate}
Suppose that we have $n$ observations and that in each observation we have $m$ responses. Responses are dependent in each observation but independent from different observations. Besides, there are $p$ predictors. We then have the model for one observation, with responses centered and predictors standardized,  
$$Y_{i} = X_{i}\beta+\varepsilon_{i}, $$
where $Y_{i}, \varepsilon_{i} $ are both $1\times m$ vectors, $X_{i}$ a $1\times p$ vector, $\beta$ a $p \times m $ matrix, and $\varepsilon_{i} \sim N(0, \Sigma)$.\\
Next the likelihood is 
\begin{align*}
f(Y|X, \beta, \Sigma) = (2\pi)^{-\frac{n}{2}}|\Sigma|^{-\frac{n}{2}}\exp\big[-\frac{\sum_{i=1}^n(Y_{i}-X_{i}\beta)\Sigma^{-1}(Y_{i}-X_{i}\beta)^T}{2}\big], 
\end{align*}
along with priors 
\begin{align*}
\nu(\Sigma|\Psi, v) =&  \frac{|\Psi|^{\frac{v}{2}}}{2^{\frac{v m}{2}}\Gamma_m(\frac{v}{2})}|\Sigma|^{-\frac{\nu+m+1}{2}}\exp\big(-\frac{1}{2}tr(\Psi \Sigma^{-1})\big)\equiv W^{-1}(\Psi, v),\\
\nu(\beta \vert \lambda,\alpha) =& \left(\dfrac{\alpha}{2^{1/\alpha+1}\Gamma(1/\alpha)}\right)^{mp}
\left(\prod_{j=1}^{mp}\lambda_j\right)^{1/\alpha}
\exp\left\{-\frac{1}{2}\sum_{j=1}^{mp}\lambda_j|\beta_j|^\alpha\right\},\\
\nu(\lambda_j|\kappa_j, e_1, f_1, e_2, f_2)  =& (1-\kappa_j) \text{Gamma}(\lambda_j; e_1, f_1) + \kappa_j \text{Gamma}(\lambda_j; e_2, f_2),\\
\nu(\alpha) \sim& \text{Unif}(0.5,4),\\
\nu(\kappa_j) \sim& \text{Bern}(1/2).
\end{align*}
Consequently, the posterior full conditionals are, 
\begin{align*}
\Sigma|\beta, \Psi, v \sim&W^{-1}(\Psi +(Y-X\beta)^T(Y-X\beta), v + n)\\
q(\beta_j|\beta_{-j}, \lambda_j, \alpha) \propto&\exp\big[-\frac{tr((Y-X\beta)^T(Y-X\beta)\Sigma^{-1})+\lambda_j|\beta_j|^\alpha}{2}\big]\\
q(\alpha|\beta, \lambda) \propto& \left(\dfrac{\alpha}{2^{1/\alpha+1}\Gamma(1/\alpha)}\right)^{mp}
\left(\prod_{j=1}^{mp}\lambda_j\right)^{1/\alpha}
\exp\left\{-\frac{1}{2}\sum_{j=1}^{mp}\lambda_j|\beta_j|^\alpha\right\}\\
\pi(\lambda_i \vert  Y,  \beta, \alpha, \kappa_i,e,f) =& (1-\kappa_i)\text{Gamma}(e_1+1/\alpha, f_1+\dfrac{|\beta_i|^\alpha}{2})\\&+\kappa_i\text{Gamma}(e_2+1/\alpha, f_2+\dfrac{|\beta_i|^\alpha}{2})\\
\pi(\kappa_i|\lambda_i,e,f) \sim& \text{Bern}(\frac{\omega_2}{\omega_1+\omega_2}).
\end{align*}
We then again apply a deterministic scan component-wise MCMC algorithm. Apparently for $\Sigma$, $\lambda$, and $\kappa$ it is a direct update from known distribution while for the remaining $\beta$ and $\alpha$ we use the same Metropolis-Hastings algorithm as in the scalar response version.   
\end{appendix}

\bibliography{BPRref}
\bibliographystyle{apalike}
\end{document}